\documentclass[11pt]{article}
\usepackage{graphicx}
\usepackage[font=footnotesize]{caption}

\usepackage{natbib}
\setcitestyle{authoryear,round,notesep={; },aysep={},yysep={;}}

\usepackage{amsmath,amssymb,amsfonts,amsthm,mathrsfs}
\usepackage[dvipsnames]{xcolor}
\usepackage{hyperref}
\usepackage{tabularx}
\usepackage{booktabs}
\usepackage{textcomp}
\usepackage{array}
\usepackage{placeins}

\newtheorem{theorem}{Theorem}

\numberwithin{equation}{section}
\numberwithin{table}{section}
\numberwithin{theorem}{section}
\numberwithin{lemma}{section}
\numberwithin{definition}{section}
\numberwithin{example}{section}

\makeatletter
\newcommand{\removeperiod}{\@ifnextchar.{\@gobble}\relax}
\makeatother
\hypersetup {
    colorlinks = true,
    linkcolor = blue,
    filecolor = magenta,
    urlcolor = NavyBlue,
    citecolor = NavyBlue
}

\begin{document}
\title{\LARGE \bf
Modelling Toxin-Induced Intraspecific Personality Variation in Predator-Prey Dynamics}
\author{Jingyi Jin
\thanks{Department of Mathematical and Statistical Sciences \& Interdisciplinary Lab for Mathematical Ecology and Epidemiology, University of Alberta, Edmonton, Alberta T6G 2G1, Canada}
\and Tian Xu Wang \footnotemark[1]
\and Hao Wang \thanks{Corresponding author: \href{mailto:hao8@ualberta.ca}{\texttt{hao8@ualberta.ca}}. Department of Mathematical and Statistical Sciences \& Interdisciplinary Lab for Mathematical Ecology and Epidemiology, University of Alberta, Edmonton, Alberta T6G 2G1, Canada}}

\maketitle

\begin{abstract}
Personality traits, such as boldness and shyness, can affect interspecific interactions through differences in individual behaviour. Environmental toxins may also alter these traits. However, few predator-prey models have considered animal personality. In this study, we extend a toxin-dependent predator-prey model by incorporating personality variation in both prey and predators. Prey personality affects birth and growth, whereas predator personality affects the predator functional response. The model is parameterized for rainbow trout and aquatic
insect prey exposed to methylmercury. Numerical results show that prey and predator personality have different effects on population dynamics. Prey personality tends to reduce biomass fluctuations and promote stable coexistence. In contrast, predator personality can increase biomass fluctuations and lead to persistent population oscillations. These effects are mainly observed at low and intermediate toxin concentrations that are environmentally relevant to freshwater ecosystems. Using a quasi-steady-state approximation, we reduce the model to a two-dimensional system and study its positivity, boundedness, and the existence of equilibria. The equilibrium structure depends jointly on the toxin level and the prey and predator personality contrasts. Based on this existence classification, we numerically
investigate the local stability of all feasible equilibria and
identify parameter regions corresponding to locally stable extinction, prey-only persistence, coexistence, and bistability. Overall, toxin-induced personality variation can qualitatively change predator-prey dynamics.
\end{abstract}

{\bf Keywords} {\small Environmental toxins, personality traits, predator-prey dynamics}


{\bf MSC Classification} {\small 92C15, 34A12, 34D20}

\section{Introduction}

Animals often exhibit consistent individual differences in behaviour, commonly referred to as animal personality \citep{jolle2019}. Here, we focus on activity-related behavioural differences associated with the boldness-shyness continuum. Activity level is an important dimension of animal personality and reflects consistent differences among individuals in how actively they move, forage, and feed \citep{reale2007}. Bolder individuals are often more active and exploratory and may exhibit greater foraging activity and food intake, whereas shyer individuals tend to be less active \citep{biro2008animal}. For example, in three-spined sticklebacks (\textit{Gasterosteus aculeatus}), bolder individuals consumed considerably more food than shyer individuals, even after differences in body size were taken into account \citep{jolles2016}. These behavioural differences can subsequently influence growth and reproductive performance. In the freshwater crayfish \textit{Cherax destructor}, individuals that were consistently bolder were also more voracious and maintained higher growth rates over time than shyer individuals \citep{biro2014individual}. Personality differences can also have consequences for reproductive success \citep{smith2008fitness}. For example, in largemouth bass (\textit{Micropterus salmoides}), bolder adult males achieved greater reproductive success than shyer males across three reproductive seasons under semi-natural conditions \citep{ballew2017fitness}.


Although these personality traits are generally consistent, they can
also show substantial plasticity and be shaped by environmental
stressors \citep{Bell2009,Dingemanse2010}. Environmental stressors may therefore
shift average behavioural tendencies and alter the magnitude and
structure of among-individual variation
\citep{DingemanseWolf2013,GrunstGrunst2024}. Toxic contaminants are an
important class of environmental stressors and are widely present in
aquatic ecosystems. Contaminant residues have been detected in aquatic
animals across the Pacific region \citep{Nguyen2024residal}, while high
levels of toxic metals have been reported in soils, plants, animals,
and water bodies in South Africa \citep{Okereafor2020toxic}. Metals,
pesticides, pharmaceuticals, and other organic compounds can alter
personality-related traits and affect foraging,
reproductive behaviour, and activity levels
\citep{MontiglioRoyaute2014,Saaristo2018,Jacquin2020,Bertram2022}.
For example, golden shiners exposed to dietary mercury showed increased
activity and reduced shoal cohesion, whereas
honeybee colonies exposed to syrups containing Cry1Ab protoxin showed
reduced foraging activity
\citep{WebberHaines2003,RamirezRomero2005}. Despite the ecological
importance of behavioural variation, we still have a limited
understanding of how pollution-induced behavioural changes affect
population and community dynamics.

Mathematical models provide a way to investigate these population-level effects. Several toxin-dependent population models have been developed to investigate how contaminants affect ecological interactions
\citep{hallam1983effects1,DeLuna1987model,freedman1991models,
spromberg2006relating,erickson2014daphnia}. For example,
\citet{huang2015impact} modelled methylmercury transfer from prey to
predator and described toxicity through body burden, while
\citet{wang2024stoichiometric} extended this approach by incorporating
stoichiometric effects. However, these models did not consider
toxin-induced changes in the bold-shy composition of prey and predator
populations. A recent spatial model included fixed bold and shy
phenotypes, but the proportions of these phenotypes were not toxin
dependent \citep{WANGExistence}. This is an important limitation because
environmentally induced intraspecific variation can alter interspecific
interactions and broader ecological processes
\citep{TuomainenCandolin2011,Bolnick2011,DesRoches2018}. Therefore, the effects of toxin-induced changes in the bold-shy
composition of both prey and predators on predator-prey dynamics
have not yet been addressed.

In this work, we address the gap identified above by incorporating
toxin-induced changes in personality composition into a
predator-prey model. Both prey and predator populations consist of
bold and shy individuals, and the proportions of these behavioural
types depend on toxin body burden. We use numerical simulations to
study the full model and a quasi-steady-state approximation to derive
a reduced system for mathematical analysis. Our main contributions are as follows:

\begin{enumerate}
    \item \textbf{Model formulation.}
    We incorporate toxin-dependent bold-shy composition into both
    prey and predator populations. Prey personality affects prey birth
    and growth, whereas predator personality affects the predator
    functional response. We parameterize the model for a system
    consisting of rainbow trout (\textit{Oncorhynchus mykiss}) and
    aquatic insect prey exposed to methylmercury.

    \item \textbf{Effects of prey and predator personality.}
    Through numerical simulations, we show that prey and predator
    personality have different effects on population dynamics. Prey
    personality tends to reduce population fluctuations and promote
    stable coexistence, whereas predator personality can increase
    population fluctuations and lead to persistent population
    oscillations.

    \item \textbf{Reduction and mathematical properties.}
   Using a quasi-steady-state approximation, we reduce the original
    model to a two-dimensional system. We establish the positivity and
    boundedness of the reduced system and identify a positively
    invariant region. 
    \item \textbf{Structure and local stability of equilibria.} We derive conditions for the existence of extinction, prey-only, and positive coexistence equilibria. The equilibrium structure is jointly determined by the toxin level $T$, the prey personality contrast $\rho_\alpha$, and the predator personality contrast $\rho_\mu$. The toxin thresholds determine the possible candidate roots for coexistence. For a fixed toxin level, $\rho_\alpha$ determines whether the prey can persist, while $\rho_\mu$ determines whether the system has zero, one, or two positive coexistence equilibria. Based on this existence classification, we numerically investigate the local stability of all biologically feasible equilibria and identify parameter regions corresponding to locally stable extinction, prey-only persistence, coexistence, and bistability.

\end{enumerate}

The rest of this paper is organized as follows. In Section 2, we develop a predator-prey model that incorporates toxin-induced intraspecific personality variation. In Section 3, we connect the model to experimental data via model parameterization. In Section 4, we numerically study the effects of personality and toxins on rainbow trout and aquatic insect prey. In Section 5, we reduce the model using a quasi-steady-state approximation, analyze the existence of equilibria, and numerically investigate their local stability. Section 6 provides the discussion.

\section{Model Formulation}
\label{sec: models}

Building on the toxin body-burden framework of \citet{huang2015impact}, we propose the following predator-prey model that incorporates toxin-induced intraspecific personality variation:
\begin{equation}
\label{eq: modified model}
\begin{aligned}
\underset{\text{prey biomass}}{\underbrace{\frac{\mathrm{d}x}{\mathrm{d}t}}}
&=
\underset{\text{birth and growth}}{\underbrace{b(u,x,\gamma_1(u))x}}
-
\underset{\text{mortality}}{\underbrace{d_1(u)x}}
-
\underset{\text{predation loss}}{\underbrace{p(x,\gamma_2(v))y}}
\\
\underset{\text{predator biomass}}{\underbrace{\frac{\mathrm{d}y}{\mathrm{d}t}}}
&=
\underset{\text{growth from predation}}{\underbrace{e(v)p(x,\gamma_2(v))y}}
-
\underset{\text{mortality}}{\underbrace{d_2(v)y}}
\\
\underset{\text{toxin in prey}}{\underbrace{\frac{\mathrm{d}U}{\mathrm{d}t}}}
&=
\underset{\text{uptake}}{\underbrace{a_1Tx}}
-
\underset{\text{depuration}}{\underbrace{\sigma_1U}}
-
\underset{\text{loss due to mortality}}{\underbrace{d_1(u)U}}
-
\underset{\text{predation loss}}{\underbrace{p(x,\gamma_2(v))yu}}
\\
\underset{\text{toxin in predator}}{\underbrace{\frac{\mathrm{d}V}{\mathrm{d}t}}}
&=
\underset{\text{uptake}}{\underbrace{a_2Ty}}
-
\underset{\text{depuration}}{\underbrace{\sigma_2V}}
-
\underset{\text{loss due to mortality}}{\underbrace{d_2(v)V}}
+
\underset{\text{gain from predation}}{\underbrace{p(x,\gamma_2(v))yu}}
\\
u &= U/x,
\qquad
v = V/y.
\end{aligned}
\end{equation}

Here, \(x(t)\) and \(y(t)\) denote the prey and predator biomass densities, respectively, measured in \(\mathrm{g}/\mathrm{L}\), while \(U(t)\) and \(V(t)\) denote the corresponding total toxin concentrations, measured in \(\mu\mathrm{g}/\mathrm{L}\). The body burdens \(u(t)=U(t)/x(t)\) and \(v(t)=V(t)/y(t)\) represent toxin concentration per unit biomass, measured in \(\mu\mathrm{g}/\mathrm{g}\). The functions \(d_1(u)\) and \(d_2(v)\) denote body-burden-dependent mortality rates of prey and predators, respectively, and \(e(v)\) is the efficiency with which consumed prey biomass is converted into predator biomass. The parameters \(a_1\) and \(a_2\) are environmental toxin-uptake rates, \(T\) is the environmental toxin concentration, and \(\sigma_1\) and \(\sigma_2\) are depuration rates.

Empirical evidence shows that contaminant exposure can alter activity and the expression of personality traits. For example, atrazine exposure reduced activity in initially bold \textit{Faxonius virilis} crayfish, whereas comparatively little behavioural change was observed in shy individuals \citep{steele2019}. Similarly, chlorpyrifos exposure reduced swimming activity and affected foraging-related behaviour in California killifish, \textit{Fundulus parvipinnis} \citep{renick2016}. Multigenerational exposure to a mixture of urban contaminants has also been shown to alter foraging and reproductive behaviour in fathead minnows, \textit{Pimephales promelas} \citep{swank2021}.

To incorporate the effects of toxin exposure on intraspecific variation in activity-related behaviour along the boldness-shyness continuum, we assume that the behavioural composition of each population depends on toxin body burden. Specifically, \(\gamma_1(u)\) and \(\gamma_2(v)\) denote the proportions of bold, more active individuals within the prey and predator populations, respectively. Changes in personality composition can, in turn, affect demographic and trophic processes. Bolder prey may spend more time moving and foraging and consequently obtain greater food intake \citep{jolles2016}, providing additional resources for somatic growth and reproduction. Consistent behavioural differences have accordingly been associated with variation in growth rates \citep{biro2014individual}. Prey reproduction and growth are therefore assumed to depend on the proportion of bolder individuals through \(b(u,x,\gamma_1(u))\). Similarly, bolder, more active predators may forage more actively and consequently achieve higher prey consumption rates. For example, bolder three-spined sticklebacks consumed a greater proportion of prey than shyer individuals \citep{Ioannou2008}, while higher activity levels in mud crabs increased the magnitude of their functional response to mussel prey \citep{toscano2014}. We therefore represent this behavioural effect on predation through the personality-dependent functional response \(p(x,\gamma_2(v))\).  Thus, toxin exposure can influence predator-prey dynamics both directly, through physiological effects, and indirectly, by shifting the bold-shy composition within each species.

Model \eqref{eq: modified model} can be reduced using the body-burden definitions. For brevity, primes denote differentiation with respect to time. Using \(u=U/x\) and substituting the equations for \(U'\) and \(x'\) from \eqref{eq: modified model}, we obtain
\begin{equation}
\label{eq:reduceu}
\begin{aligned}
\frac{\mathrm{d}u}{\mathrm{d}t}
&=
\frac{U'}{x}-\frac{U}{x}\frac{x'}{x}
=
a_1T-\sigma_1u-b(u,x,\gamma_1(u))u .
\end{aligned}
\end{equation}
Similarly, from \(v=V/y\), and substituting the equations for \(V'\) and \(y'\), we have
\begin{equation}
\label{eq:reducev}
\begin{aligned}
\frac{\mathrm{d}v}{\mathrm{d}t}
&=
\frac{V'}{y}-\frac{V}{y}\frac{y'}{y}
=
a_2T-\sigma_2v+p(x,\gamma_2(v))(u-e(v)v).
\end{aligned}
\end{equation}
Thus, system \eqref{eq: modified model} becomes
\begin{equation}
\label{eq:reduced-model}
\begin{aligned}
\frac{\mathrm{d}x}{\mathrm{d}t}
&= b(u,x,\gamma_1(u))x-d_1(u)x-p(x,\gamma_2(v))y, \\
\frac{\mathrm{d}y}{\mathrm{d}t}
&= e(v)p(x,\gamma_2(v))y-d_2(v)y, \\
\frac{\mathrm{d}u}{\mathrm{d}t}
&= a_1T-\sigma_1u-b(u,x,\gamma_1(u))u, \\
\frac{\mathrm{d}v}{\mathrm{d}t}
&= a_2T-\sigma_2v+p(x,\gamma_2(v))(u-e(v)v).
\end{aligned}
\end{equation}

\subsection{Specific functional forms}

Following \citet{huang2015impact}, we take
\begin{equation}
d_1(u)=k_1u+m_1,\qquad
d_2(v)=k_2v+m_2,\qquad
e(v)=\beta_1\max\{0,1-\beta_2v\},
\end{equation}
where \(k_i,m_i>0\), \(0<\beta_1\leq1\), and \(\beta_2>0\).

\subsubsection{The proportion of bold prey individuals \(\gamma_1(u)\)}

Let \(\gamma_1(u)\) and \(\gamma_2(v)\) denote the proportions of bold prey and bold predators, respectively. We model these proportions as functions of toxin body burden. By definition,
$0\leq \gamma_1(u)\leq 1$, $0\leq \gamma_2(v)\leq 1$,
so \(1-\gamma_1(u)\) and \(1-\gamma_2(v)\) are the corresponding proportions of shy individuals.

Following \citet{stitt2006evaluation}, the fitted proportion of
individuals drifting within 10 minutes is used as an operational
measure of \(\gamma_1(u)\). Details of the behavioural classification,
body-burden conversion, and fitting procedure are provided in
Appendix~\ref{app:behavioural-classification}.
The four drift proportions used in the fit are
$0.07, 0.23, 0.43, 0.46$,
corresponding to aqueous copper concentrations of
\(0.5\), \(3\), \(78\), and \(229\ \mu\mathrm{g/L}\), respectively.

Because \(\gamma_1(u)\) is a proportion, it must remain between \(0\) and \(1\). For the relevant range of \(u\), the fitted polynomial is less than \(1\), so we truncate it from below and define
\begin{equation}
\gamma_1(u)
=
\max\left\{
0,\,
-1.0895u^2+1.3155u+0.1397
\right\}.
\end{equation}
This ensures that \(\gamma_1(u)\) remains biologically meaningful.

\subsubsection{The proportion of bold predator individuals \(\gamma_2(v)\)}

To estimate \(\gamma_2(v)\), we use the adult swimming responses from
Experiment 4 of \citet{Glazer2021gammav}. Individuals with a total
distance travelled above the control-group median are classified as
bold. Because the study did not report adult
body burdens, the developmental MeHg treatments are mapped to the model
variable \(v\) using a linear scaling proxy. The classification, the
origin and interpretation of this proxy, and the fitting procedure are
described in Appendix~\ref{app:behavioural-classification}. The fitted
function is
\begin{equation}
\gamma_2(v)
=
\min\left\{
1,\,
81740.5109v^2-358.1220v+0.4872
\right\}.
\end{equation}
The upper truncation ensures that \(\gamma_2(v)\) remains a proportion.

\subsubsection{The prey birth and growth rate \(b(u,x,\gamma_1(u))\)}

In \citet{huang2015impact}, the prey birth and growth rate is given by
\begin{equation}
b(u,x)
=
\alpha_1
\frac{\max\{0,1-\alpha_2u\}}{1+\alpha_3x},
\end{equation}
where the body burden \(u\) reduces prey reproduction and growth, while the half-saturation form captures density-dependent competition and crowding effects.

Because behavioural phenotype can influence growth, fecundity, and reproductive success \citep{biro2008animal,smith2008fitness,biro2014individual,ballew2017fitness}, we consider prey birth and growth to depend on the bold-shy composition of the prey population. With \(\gamma_1(u)\) denoting the proportion of bold prey individuals, define
\begin{equation}
B(u)=\alpha_{11}\gamma_1(u)+\alpha_{12}\left[1-\gamma_1(u)\right],
\end{equation}
where \(\alpha_{11}\) and \(\alpha_{12}\) are the maximum birth and growth rates of bold and shy prey individuals, respectively. We then define
\begin{equation}
\begin{aligned}
b(u,x,\gamma_1(u))
&=B(u)
\frac{\max\{0,1-\alpha_2u\}}{1+\alpha_3x}\\
&=
\gamma_1(u)\alpha_{11}
\frac{\max\{0,1-\alpha_2u\}}{1+\alpha_3x}
+
\left(1-\gamma_1(u)\right)\alpha_{12}
\frac{\max\{0,1-\alpha_2u\}}{1+\alpha_3x}.
\end{aligned}
\end{equation}
The first term represents the contribution from bold prey, whereas the second represents the contribution from shy prey. In particular, if \(\gamma_1(u)=1\), then all prey individuals are classified as bold and
\[
b(u,x,\gamma_1(u))
=
\alpha_{11}
\frac{\max\{0,1-\alpha_2u\}}{1+\alpha_3x}.
\]

\subsubsection{The predation function \(p(x,\gamma_2(v))\)}

We also incorporate the intraspecific bold-shy composition of predators into the predation process. With \(\gamma_2(v)\) denoting the proportion of bold predators, we define
\begin{equation}
p(x,\gamma_2(v))
=
\gamma_2(v)p_1(x)
+
\left(1-\gamma_2(v)\right)p_2(x),
\end{equation}
where \(p_1(x)\) and \(p_2(x)\) are the functional responses of bold and shy predators, respectively. Assume that both predator types follow Holling Type II functional responses:
\begin{equation}
p_1(x)=\frac{\mu_1x}{1+h\mu_1x},
\qquad
p_2(x)=\frac{\mu_2x}{1+h\mu_2x},
\end{equation}
where \(\mu_1\) and \(\mu_2\) are the encounter rates of bold and shy predators with prey, respectively, and \(h\) is the handling time. We assume that bold predators have a higher encounter rate with prey than shy predators \citep{Ioannou2008}; hence, we take
\[
\mu_1>\mu_2.
\]
To simplify notation, let
\(
\xi=\frac{1}{h},
\eta_1=\frac{1}{\mu_1h},
\eta_2=\frac{1}{\mu_2h}.
\)
Then
\begin{equation}
p_1(x)=\frac{\xi x}{\eta_1+x},
\qquad
p_2(x)=\frac{\xi x}{\eta_2+x}.
\end{equation}
Here, \(\eta_i\) is the half-saturation constant of \(p_i(x)\). Since \(\mu_1>\mu_2\), we have \(\eta_1<\eta_2\). Therefore, the personality-dependent predation function is
\begin{equation}
p(x,\gamma_2(v))
=
\gamma_2(v)\frac{\xi x}{\eta_1+x}
+
\left(1-\gamma_2(v)\right)\frac{\xi x}{\eta_2+x}.
\end{equation}
Using these functions, the final model is
\begin{equation}
\label{eq: final model}
\begin{aligned}
\frac{\mathrm{d}x}{\mathrm{d}t}
&=
B(u)\frac{\max\{0,1-\alpha_2u\}x}{1+\alpha_3x}
-(k_1u+m_1)x
-p(x,\gamma_2(v))y,\\
\frac{\mathrm{d}y}{\mathrm{d}t}
&=
\beta_1\max\{0,1-\beta_2v\}p(x,\gamma_2(v))y
-(k_2v+m_2)y,\\
\frac{\mathrm{d}u}{\mathrm{d}t}
&=
a_1T-\sigma_1u
-B(u)\frac{\max\{0,1-\alpha_2u\}}{1+\alpha_3x}u,\\
\frac{\mathrm{d}v}{\mathrm{d}t}
&=
a_2T-\sigma_2v
+p(x,\gamma_2(v))
\left(u-\beta_1\max\{0,1-\beta_2v\}v\right).
\end{aligned}
\end{equation}

\section{Parameterization}

In this section, we parameterize model~\eqref{eq: final model} for a
predator-prey system consisting of rainbow trout and aquatic insect
prey. Most parameters are taken from
\citet{huang2015impact}, whereas the personality-related parameters
and functions are described below.

\paragraph*{The encounter rate}

The half-saturation constant estimated by \citet{huang2015impact} is
\(\eta=0.000455\ \mathrm{g/L}\), and the per capita feeding-rate parameter is
\(\xi=0.047\ \mathrm{day}^{-1}\). These values yield an effective
predator-prey encounter rate of
\[
\mu_1=\frac{\xi}{\eta}\approx 103\ \mathrm{L\,g^{-1}\,day^{-1}}.
\]

For the numerical simulations, we take the encounter rate $\mu_1$ given above as the value for bold predators and assume a moderate \(10\%\) personality-induced difference between bold and shy individuals relative to a standard encounter rate. Accordingly, we define the baseline encounter rate as
\[ \bar{\mu}=93\ \mathrm{L\,g^{-1}\,day^{-1}}. \]
Then the encounter rate for shy predators is assumed to be \(10\%\) lower than the baseline value, giving
\[ \mu_2=0.9\bar{\mu}=83.7\ \mathrm{L\,g^{-1}\,day^{-1}}. \]
Here, \(\mu_1\) and \(\mu_2\) denote the encounter rates associated with the bold and shy predator personality types, respectively.

\paragraph*{Maximum birth and growth rate of aquatic insects}

The maximum birth and growth rate of aquatic insects,
\(\alpha_1 = 4/365\ \mathrm{day}^{-1}\),
estimated by \citet{huang2015impact}, represents the average maximum birth and growth rate of the prey population.

Because bold individuals may have higher growth rates or reproductive success than shy individuals \citep{biro2014individual,ballew2017fitness}, we assume a moderate \(20\%\) contrast around the mean maximum birth and growth rate. Accordingly, we set
\[
\alpha_{11} = 1.2\alpha_1 = \frac{4.8}{365}\ \mathrm{day}^{-1},
\qquad
\alpha_{12} = 0.8\alpha_1 = \frac{3.2}{365}\ \mathrm{day}^{-1},
\]
where \(\alpha_{11}\) and \(\alpha_{12}\) denote the maximum birth and growth rates of bold and shy aquatic insects, respectively.

The parameter estimates for rainbow trout and aquatic insects are listed in Table~\ref{Table: Parameters}.

\setlength{\arrayrulewidth}{0.35mm} 
\begin{table}[h]
\footnotesize
\centering
\caption{\footnotesize Parameters in the system \eqref{eq: final model}}
\label{Table: Parameters}
\begin{tabular}{l l l l}
\hline
Symbols & Definitions & Values used & Source \\
\hline
$\alpha_{11}$ & Maximum birth and growth rate of bold aquatic insects & 4.8/365 $\text{day}^{-1}$ & $\vartriangle$\\
$\alpha_{12}$ & Maximum birth and growth rate of shy aquatic insects & 3.2/365 $\text{day}^{-1}$ & $\vartriangle$\\
$\alpha_2$ & Effect of toxin on the growth of aquatic insects & 0.022 g/$\mu$g & $\bullet$\\
$\alpha_3$ & Crowding effect of aquatic insects & 1330 L/g & $\bullet$\\
$k_1$ & Effect of toxin on the mortality of aquatic insects & 0.0021 g/$\mu$g/day & $\bullet$\\
$m_1$ & Natural mortality rate of aquatic insects & 1/365 $\text{day}^{-1}$ & $\bullet$\\
$\xi$ & Per capita feeding rate & 0.047 $\text{day}^{-1}$ & $\bullet$\\
$\beta_1$ & Reproduction efficiency of rainbow trout & 0.1 & $\bullet$\\
$\beta_2$ & Effect of toxin on the reproduction of rainbow
trout & 33.41 g/$\mu$g & $\bullet$\\
$k_2$ & Effect of toxin on the mortality of rainbow trout & 0.00398 g/$\mu$g/day & $\bullet$\\
$m_2$ & Natural mortality rate of rainbow trout & 0.00057 $\text{day}^{-1}$ & $\bullet$\\
$a_1$ & Uptake coefficient for aquatic insects & 0.55 L/g/day & $\bullet$\\
$\sigma_1$ & Depuration rate for aquatic insects & 0.12 $\text{day}^{-1}$ & $\bullet$\\
$a_2$ & Uptake coefficient for rainbow trout & 0.1733 L/g/day & $\bullet$\\
$\sigma_2$ & Depuration rate for rainbow trout & 0.0062 $\text{day}^{-1}$ & $\bullet$\\
$\bar{\mu}$ & Baseline predator-prey encounter rate & 93 L/g/day & $\vartriangle$\\
$\mu_1$ & Encounter rate for bold predators & 102.3 L/g/day & $\vartriangle$\\
$\mu_2$ & Encounter rate for shy predators & 83.7 L/g/day & $\vartriangle$\\
\hline
\end{tabular}
\caption*{\footnotesize 
    Note: $\bullet$ Taken from \citet{huang2015impact}. \(\triangle\) Derived from the baseline estimate.}
\end{table}

\section{Numerical results and personality effects}
\label{sec4}

In this section, we examine how personality traits affect the long-term dynamics of the system. The parameter values listed in Table~\ref{Table: Parameters} are used to numerically solve the personality- and toxin-dependent predator-prey model \eqref{eq: final model}.

To better understand how toxin-induced intraspecific personality changes influence predator–prey dynamics, we compare four scenarios that differ in whether personality variation is present in the prey, the predator, or both species: (i) no personality variation, (ii) personality variation in the prey only, (iii) personality variation in the predator only, and (iv) personality variation in both prey and predator. We then vary the strength of personality-induced
behavioral differences to investigate their effects on oscillation
amplitude and system stability.


The main oscillatory changes occur for
\(0\le T\le 0.001\,\mu\mathrm{g/L}\). This concentration range
corresponds to \(0\)-\(1\,\mathrm{ng/L}\), which is environmentally
relevant for freshwater systems. Methylmercury concentrations in natural
waters are usually below \(1\,\mathrm{ng/L}\) \citep{CCME2003Hg}.
Field studies have also reported lake-water MeHg concentrations in this
range, including \(0.01\)-\(0.34\,\mathrm{ng/L}\) in lakes of Canada's
Oil Sands Region \citep{Emmerton2018Hg}; a global lake-water dataset
reported a mean MeHg concentration of \(0.14\,\mathrm{ng/L}\)
\citep{Lei2026MeHg}. Therefore, the following analyses mainly focus on
\(0\le T\le 0.001\,\mu\mathrm{g/L}\).

\subsection{Effects of the presence of personality in prey and predators}

In this section, we investigate how the presence or absence of personality
in prey and predators alters the toxin-dependent long-term dynamics and
stability structure of the predator-prey system. We consider four
cases: no personality, prey personality only, predator personality only,
and personality in both prey and predators. These cases are constructed from the model \eqref{eq: final model} and
differ only in the prey birth and growth term \(B(u)\) and the predator
foraging response \(p(x,\gamma_2(v))\), as summarized in
Table~\ref{tab:model_variants}. 

When personality is absent from both prey and predators
(Fig.~\ref{fig:variant_comparison}a), all individuals within each species
are represented by the same behavioural response. At low toxin
concentrations, the prey and predator biomasses exhibit persistent
oscillations. As the toxin concentration increases, the oscillations
disappear and the system approaches a stable coexistence equilibrium.
At still higher toxin concentrations, the predator becomes extinct and
the system transitions to a stable prey-only equilibrium. Thus,
increasing toxin initially suppresses the oscillatory dynamics but
eventually results in predator loss.

When personality is included only in the prey population
(Fig.~\ref{fig:variant_comparison}b), no oscillatory interval is observed
over the toxin range considered. Instead, the system remains at a stable
coexistence equilibrium at low and moderate toxin concentrations before
transitioning to a prey-only equilibrium at high toxin concentrations.
Compared with the no-personality case, the presence of prey personality
therefore suppresses the low-toxin oscillatory regime.

More complex dynamics arise when personality is included only in the
predator population (Fig.~\ref{fig:variant_comparison}c). At very low
toxin concentrations, the system exhibits oscillatory dynamics. This
initial oscillatory regime is followed by stable coexistence for
approximately
\(
0.04\times10^{-3}
\lesssim T \lesssim
0.18\times10^{-3}\,\mu\mathrm{g/L}.
\)
As the toxin concentration increases further, a second oscillatory
interval occurs approximately over
\(
0.19\times10^{-3}
\lesssim T \lesssim
0.42\times10^{-3}\,\mu\mathrm{g/L}.
\)
The system then returns to a stable coexistence equilibrium before
transitioning to a prey-only state at approximately
\(
T=0.5\times10^{-3}\,\mu\mathrm{g/L}.
\)
Thus, the presence of predator personality reorganizes the
toxin-dependent dynamics by producing two separate oscillatory
intervals. 

When personality is included in both prey and predators
(Fig.~\ref{fig:variant_comparison}d), the system remains at a stable
coexistence equilibrium at low toxin concentrations. An intermediate
oscillatory interval, with relatively small biomass fluctuations, occurs
approximately over
\(
0.18\times10^{-3}
\lesssim T \lesssim
0.30\times10^{-3}\,\mu\mathrm{g/L}.
\)
This interval is followed by another stable coexistence regime and,
at sufficiently high toxin concentrations, a stable prey-only
equilibrium. Compared with the predator-personality-only case, including
prey personality removes the very-low-toxin oscillatory interval and
reduces the extent of the intermediate oscillatory regime. This comparison
shows that differences in prey birth and growth can partly offset
the destabilizing effect of differences in predator foraging.

Overall, all four cases exhibit predator extinction when the toxin
concentration becomes sufficiently high. The principal differences
among the cases occur at low and moderate toxin concentrations, where
the presence and trophic location of personality determine whether the
system approaches a coexistence equilibrium or exhibits persistent
oscillations. These results demonstrate that the presence of personality
in prey and predators can substantially reshape the toxin-dependent
long-term dynamics of the predator-prey
system.

\begin{figure}[htbp]
    \centering

    \begin{minipage}{0.48\textwidth}
        \centering
        \includegraphics[width=\textwidth]{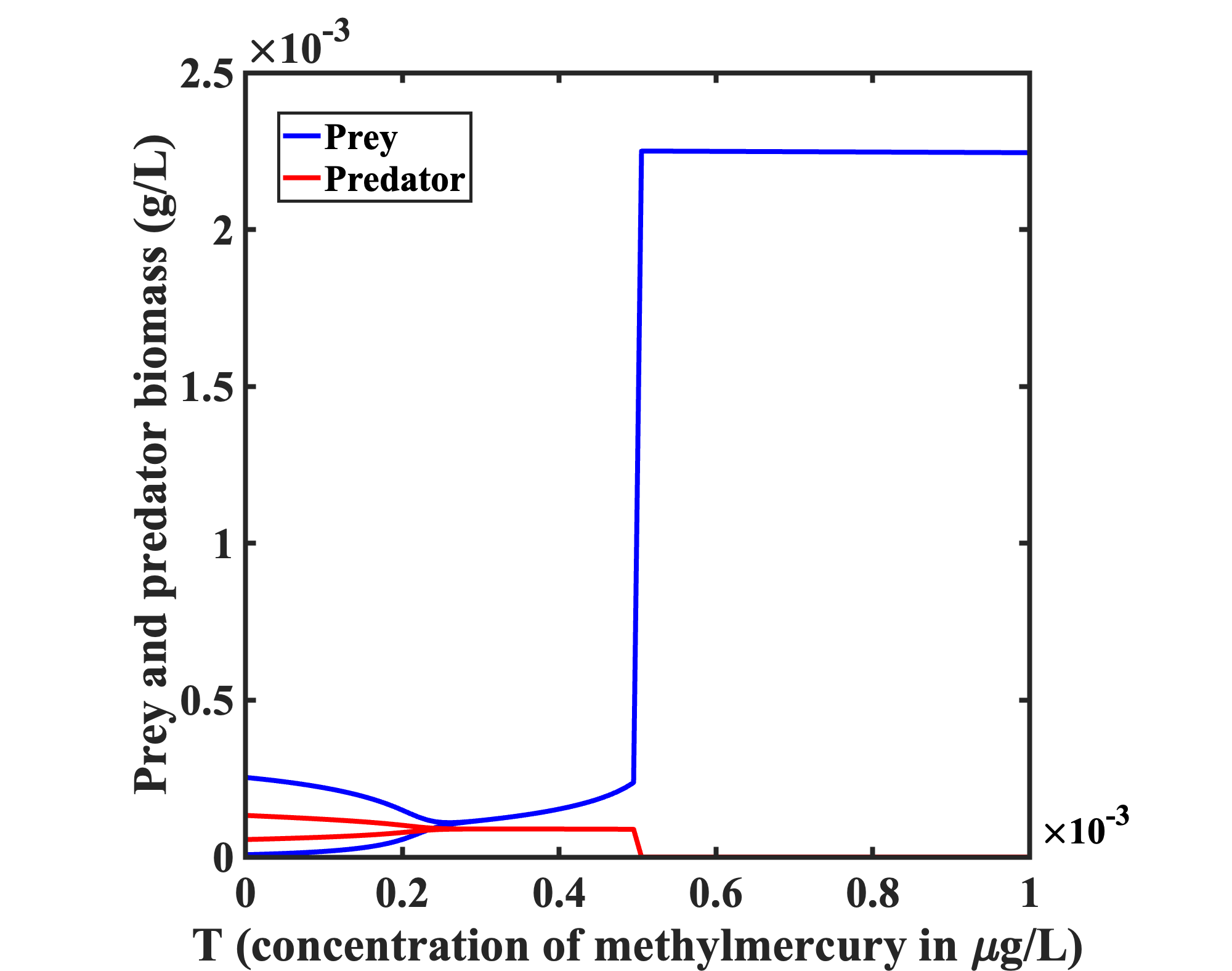}
        \caption*{(a) No personality}
    \end{minipage}
    \hfill
    \begin{minipage}{0.48\textwidth}
        \centering
        \includegraphics[width=\textwidth]{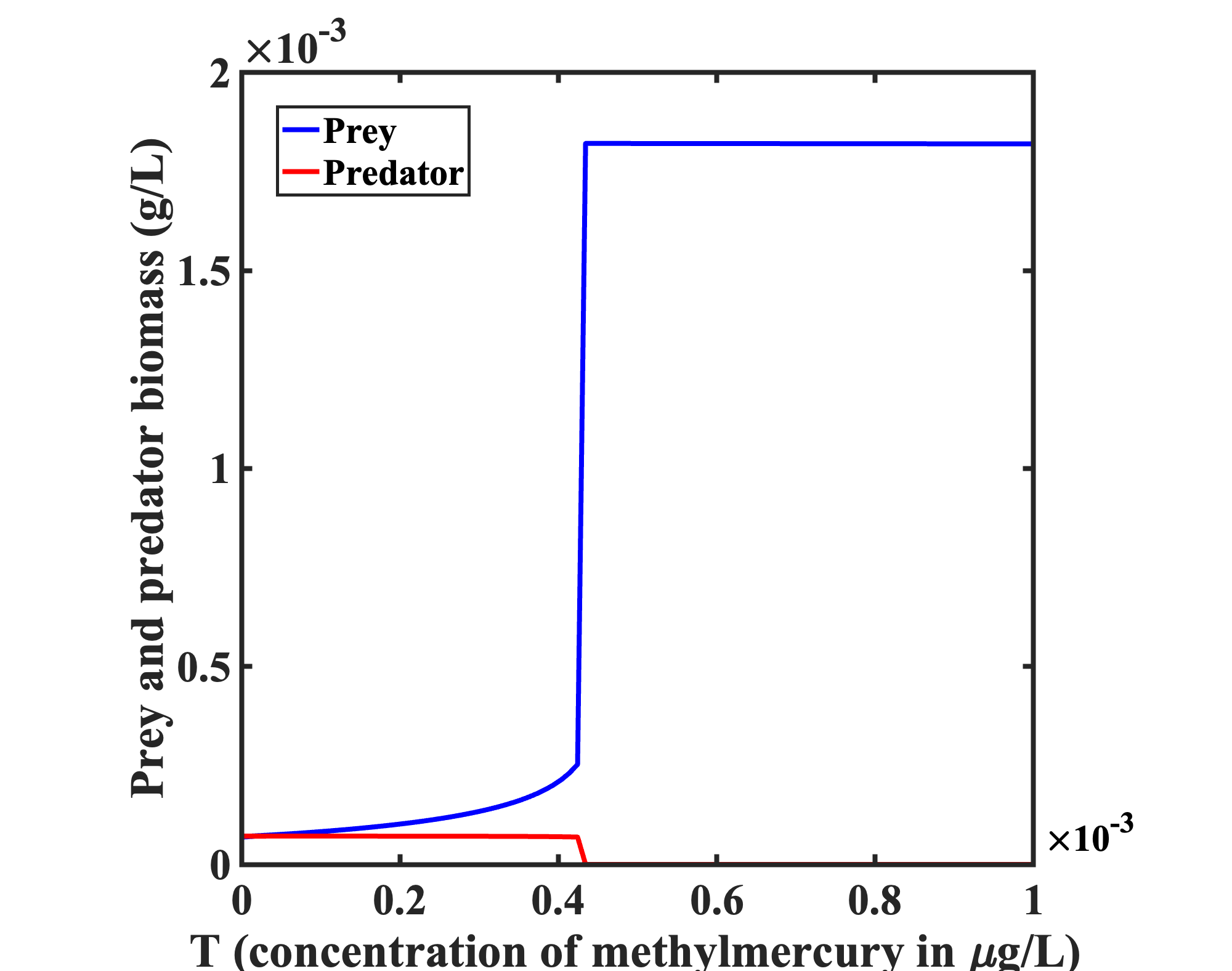}
        \caption*{(b) Prey personality only}
    \end{minipage}

    \vspace{0.3cm}

    \begin{minipage}{0.48\textwidth}
        \centering
        \includegraphics[width=\textwidth]{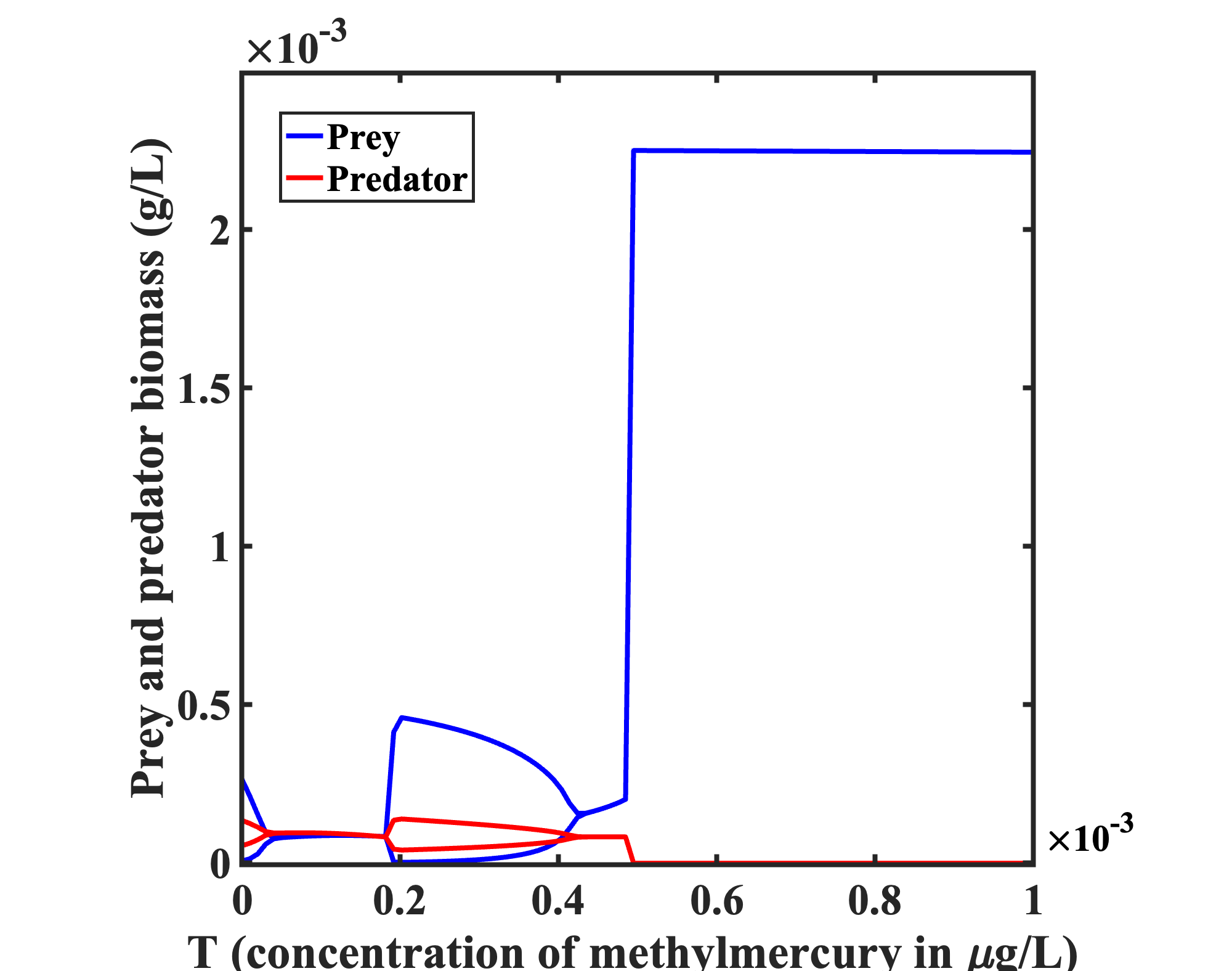}
        \caption*{(c) Predator personality only}
    \end{minipage}
    \hfill
    \begin{minipage}{0.48\textwidth}
        \centering
        \includegraphics[width=\textwidth]{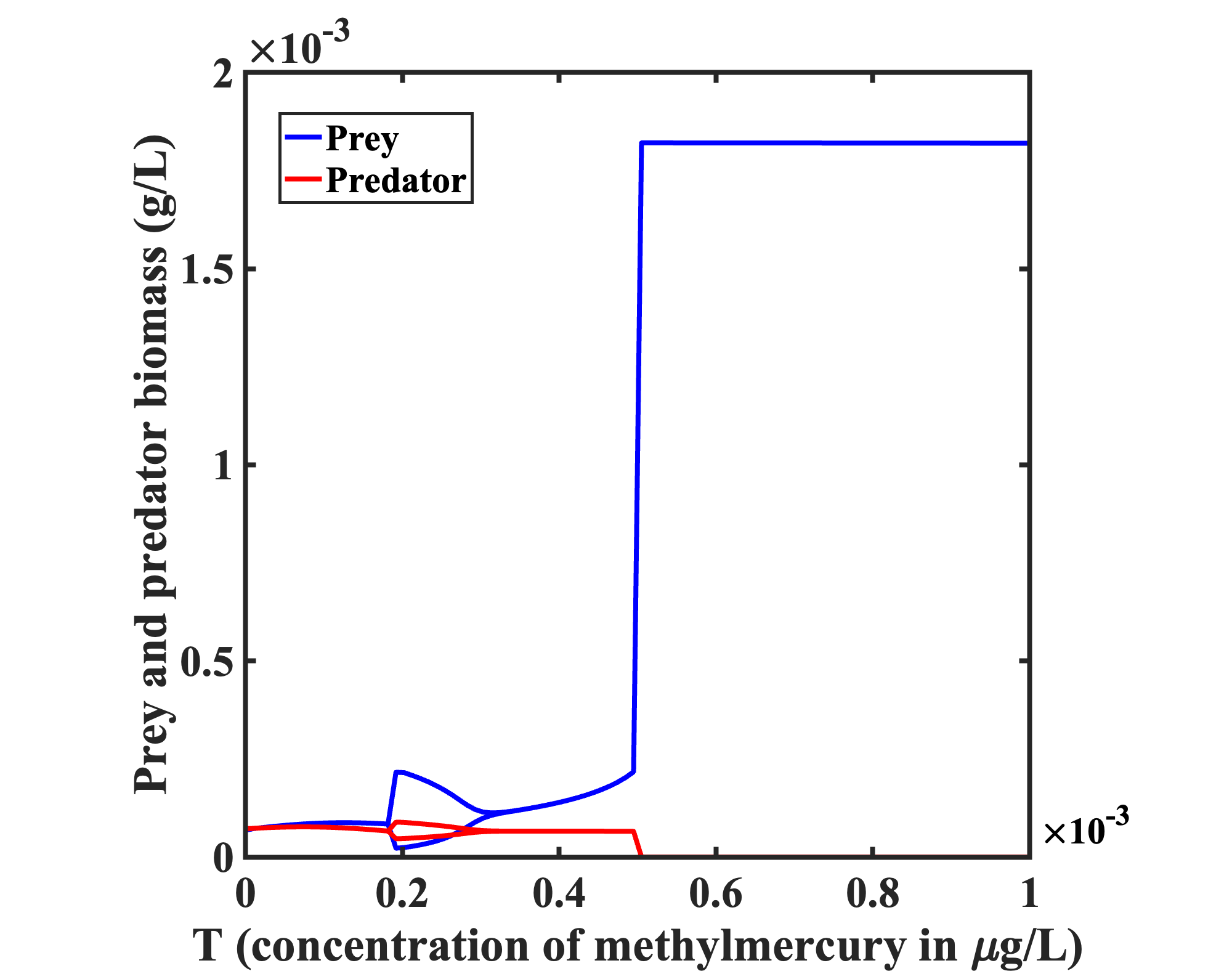}
        \caption*{(d) Prey and predator personality}
    \end{minipage}
    \caption{
    Bifurcation diagrams for the four model variants. The no-personality and prey-only models use \(\mu=93\), while the predator-only and full personality models use \(\mu_1=102.3\) and \(\mu_2=83.7\), corresponding to a \(10\%\) contrast around \(\mu=93\). All other parameters are as in Table~\ref{Table: Parameters}. Each panel shows the long-term biomass behaviour of the prey (aquatic insects) and predator (rainbow trout) as the environmental methylmercury concentration \(T\) varies from \(0\) to \(0.001\,\mu\mathrm{g/L}\).}
\label{fig:variant_comparison}
\end{figure}

\begin{table}[ht]
\centering
\caption{Model variants used to separate the effects of prey and predator personality.}
\label{tab:model_variants}
\resizebox{\textwidth}{!}{
\begin{tabular}{llll}
\hline
Panel & Model variant & Prey birth and growth term \(B(u)\) & Predator foraging response \(p(x,\gamma_2(v))\) \\
\hline
(a) & No personality &
\(\alpha_1\) &
\(\displaystyle \frac{\xi x}{x+\xi/\mu}\) \\[2mm]

(b) & Prey personality only &
\(\displaystyle \alpha_{11}\gamma_1(u)+\alpha_{12}\bigl(1-\gamma_1(u)\bigr)\) &
\(\displaystyle \frac{\xi x}{x+\xi/\mu}\) \\[2mm]

(c) & Predator personality only &
\(\alpha_1\) &
\(\displaystyle
\frac{\xi x}{x+\xi/\mu_1}\gamma_2(v)+\frac{\xi x}{x+\xi/\mu_2}\bigl(1-\gamma_2(v)\bigr)\) \\[5mm]

(d) & Prey and predator personality &
\(\displaystyle \alpha_{11}\gamma_1(u)+\alpha_{12}\bigl(1-\gamma_1(u)\bigr)\) &
\(\displaystyle
\frac{\xi x}{x+\xi/\mu_1}\gamma_2(v)+\frac{\xi x}{x+\xi/\mu_2}\bigl(1-\gamma_2(v)\bigr)\) \\
\hline
\end{tabular}
}
\end{table}

\subsection{Effects of personality-driven behavioural contrast on long-term dynamics}

\begin{figure}[htbp]
    \centering

    \begin{minipage}{0.49\textwidth}
        \centering
        \includegraphics[width=\textwidth]{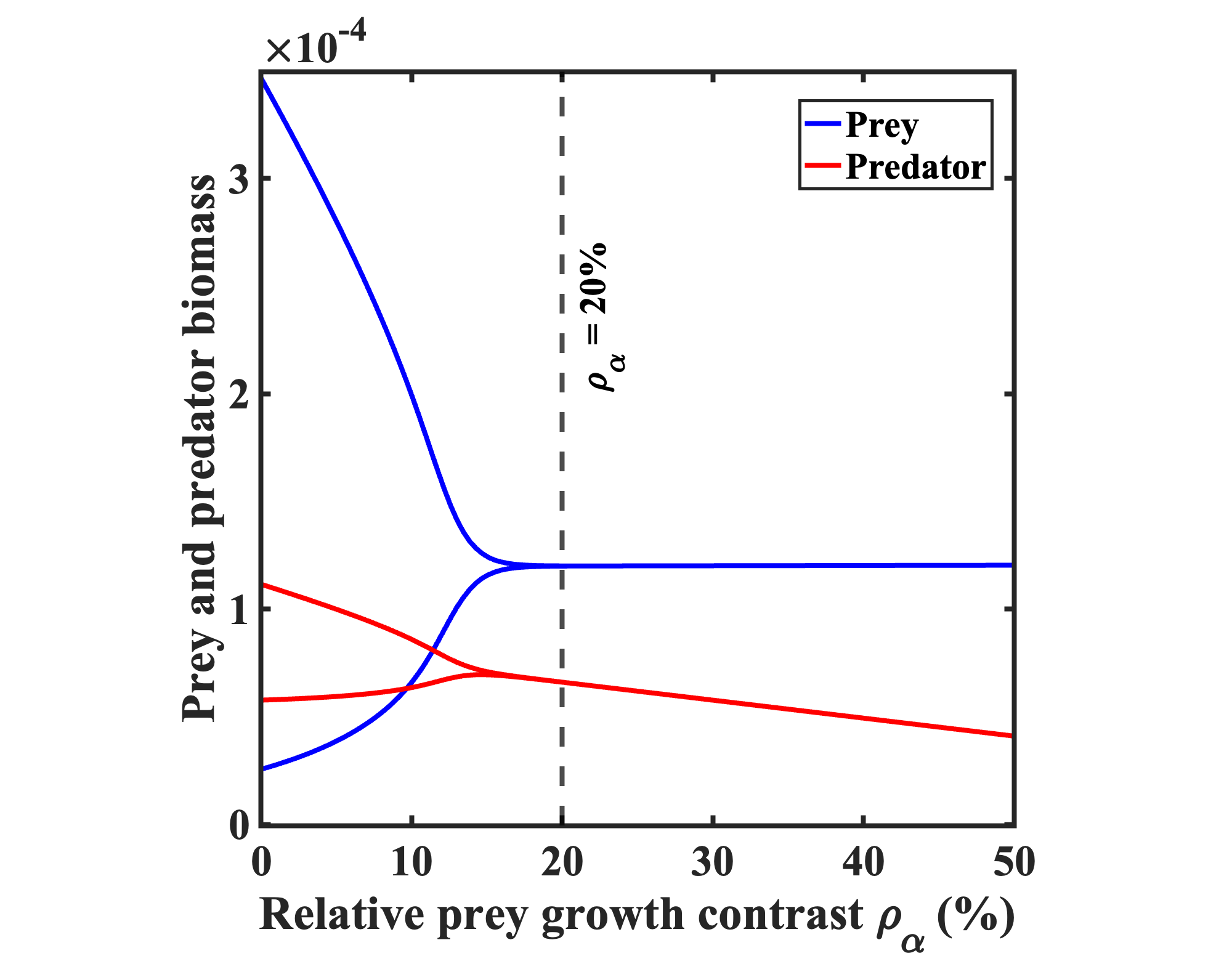}
        \caption*{(a) Prey personality contrast: biomass}
    \end{minipage}
    \hfill
    \begin{minipage}{0.49\textwidth}
        \centering
        \includegraphics[width=\textwidth]{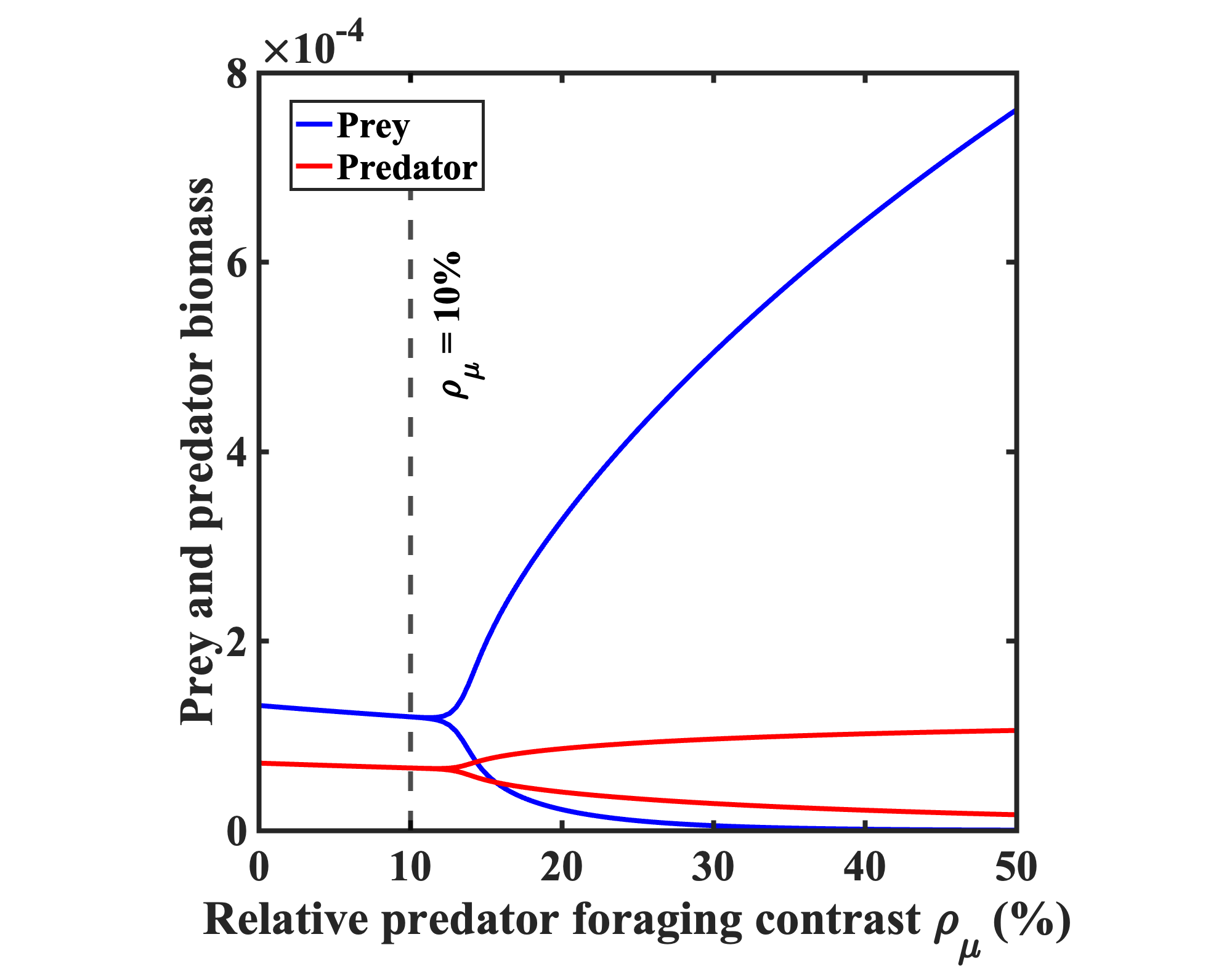}
        \caption*{(b) Predator personality contrast: biomass}
    \end{minipage}

    \caption{
    Effects of personality-driven behavioural differentiation on long-term
    biomass dynamics. Panel (a) varies the prey personality contrast
    \(\rho_\alpha\), whereas panel (b) varies the predator foraging
    contrast \(\rho_\mu\). Dashed vertical lines mark the contrast values
    used in the full personality model in Fig.~\ref{fig:variant_comparison}d.
    All simulations use \(T=0.00035\,\mu\mathrm{g/L}\).
    }
    \label{fig:contrast_effects}
\end{figure}

In this section, we examine how increasing personality-induced behavioural differences affect long-term biomass dynamics. We introduce
two relative contrast measures.

For predators, the relative personality contrast is defined as
\[
\rho_\mu
:=
\frac{\mu_1-\mu_2}{2\bar{\mu}},
\qquad
\bar{\mu}:=\frac{\mu_1+\mu_2}{2}.
\]
Thus,
\[
\mu_1=\bar{\mu}(1+\rho_\mu),
\qquad
\mu_2=\bar{\mu}(1-\rho_\mu).
\]
The parameter \(\rho_\mu\) measures the magnitude of the
personality-induced difference in predator foraging response. A larger
\(\rho_\mu\) means that personality has a stronger effect on predator
foraging behaviour.

For prey, the relative personality contrast is defined as
\[
\rho_\alpha
:=
\frac{\alpha_{11}-\alpha_{12}}{2\bar{\alpha}},
\qquad
\bar{\alpha}:=\frac{\alpha_{11}+\alpha_{12}}{2}.
\]
Thus,
\[
\alpha_{11}=\bar{\alpha}(1+\rho_\alpha),
\qquad
\alpha_{12}=\bar{\alpha}(1-\rho_\alpha).
\]
The parameter \(\rho_\alpha\) measures the magnitude of the
personality-induced difference in prey birth and growth. A larger
\(\rho_\alpha\) means that personality has a stronger effect on prey
birth and growth. Both contrasts are reported as percentages in the
figures.
To preserve positive baseline rates, we restrict both contrasts to
\begin{equation}
0\leq\rho_\alpha<1,
\qquad
0\leq\rho_\mu<1.
\label{eq:contrast-domains}
\end{equation}

The two contrast parameters have opposite effects on the long-term dynamics (Fig.~\ref{fig:contrast_effects}). As the prey personality contrast
\(\rho_\alpha\) increases, the oscillation amplitudes decrease (Fig.~\ref{fig:contrast_effects}a). For sufficiently large
\(\rho_\alpha\), the oscillations disappear and the system converges to a stable
equilibrium. Thus, when personality has a stronger effect on prey
birth and growth, the system tends to become more stable.

In contrast, increasing the predator personality contrast \(\rho_\mu\)
destabilizes the system (Fig.~\ref{fig:contrast_effects}b). At low
\(\rho_\mu\), the system converges to a stable equilibrium. As
\(\rho_\mu\) increases, persistent oscillations appear and their amplitudes
increase. Thus, when personality has a stronger effect on predator
foraging response, the system is more likely to exhibit large biomass
oscillations.

These results show that personality variation affects long-term biomass
dynamics through the magnitude of its effect on biological behaviour.
In these simulations, personality-driven differences in prey birth and
growth suppress oscillations, whereas differences in predator foraging
generate and amplify oscillations. These findings agree with the
model-variant comparison above, where prey personality
has a stabilizing effect and predator personality has a destabilizing effect.

\section{Model analysis}
In this section, we nondimensionalize the model and apply a
quasi-steady-state approximation to obtain a two-dimensional system.
We establish the positivity and boundedness of the reduced system,
derive conditions for equilibrium existence, and numerically classify
the local stability of all biologically feasible equilibria.

\subsection{Nondimensionalization}

We rescale system~\eqref{eq: final model} by introducing the following nondimensional quantities:
\begin{equation}
\begin{aligned}
&\widetilde{x}=\alpha_3x,\quad
\widetilde{y}=\frac{\alpha_3\xi}{\alpha_1}y,\quad
\widetilde{u}=\alpha_2u,\quad
\widetilde{v}=\beta_2v,\quad
\widetilde{t}=\alpha_1t,\quad
\widetilde{k}_1=\frac{k_1}{\alpha_1\alpha_2},\quad
\widetilde{k}_2=\frac{k_2}{\alpha_1\beta_2},\\
&\widetilde{m}_1=\frac{m_1}{\alpha_1},\quad
\widetilde{m}_2=\frac{m_2}{\alpha_1},\quad
\widetilde{\beta}_1=\frac{\beta_1\xi}{\alpha_1},\quad
\widetilde{\beta}_2=\frac{\beta_2\xi}{\sigma_1\alpha_2},\quad
c=\frac{\beta_2a_2}{a_1\alpha_2},\quad
\epsilon=\frac{\alpha_1}{\sigma_1},\\
&\widetilde{T}=\frac{\alpha_2a_1T}{\sigma_1},\quad
\widetilde{\sigma}_2=\frac{\sigma_2}{\sigma_1},\quad
\widetilde{\eta}_1=\alpha_3\eta_1,\quad
\widetilde{\eta}_2=\alpha_3\eta_2,\quad
\widetilde{\alpha}_{11}=\frac{\alpha_{11}}{\alpha_1},\quad
\widetilde{\alpha}_{12}=\frac{\alpha_{12}}{\alpha_1}.
\end{aligned}
\end{equation}

Dropping the tildes for simplicity, all variables and parameters below are nondimensional. Since the original dimensional parameters $\alpha_2$ and $\beta_2$ still appear in $\gamma_1$ and $\gamma_2$, we denote them by $\widehat{\alpha}_2$ and $\widehat{\beta}_2$, respectively. The nondimensional system is then given by:

\begin{equation}
\label{eq:nondimension}
\resizebox{\textwidth}{!}{ 
    $\displaystyle
\begin{aligned}
& \frac{\mathrm{d}x}{\mathrm{d}t}=
\Bigl(\Bigl\{\alpha_{11}\gamma_1(\frac{u}{\widehat{\alpha}_{2}})+\alpha_{12} \bigl[1-\gamma_1(\frac{u}{\widehat{\alpha}_{2}})\bigl]\Bigl\}\frac{\max\{0,1-u\}}{1+x}-k_1u-m_1\Bigl)x-\Bigl\{\frac{x}{x+\eta_1}\gamma_2(\frac{v}{\widehat{\beta}_{2}})+\frac{x}{x+\eta_2}\bigl[1-\gamma_2(\frac{v}{\widehat{\beta}_{2}})\bigl]\Bigl\}y\\
& \frac{dy}{dt}=\beta_1\max\{0,1-v\}\Bigl\{\frac{x}{x+\eta_1}\gamma_2(\frac{v}{\widehat{\beta}_{2}})+\frac{x}{x+\eta_2}\bigl[1-\gamma_2(\frac{v}{\widehat{\beta}_{2}})\bigl]\Bigl\}y-(k_2v+m_2)y \\
& \epsilon\frac{\mathrm{d}u}{\mathrm{d}t}=T-u-\epsilon \Bigl\{\alpha_{11}\gamma_1(\frac{u}{\widehat{\alpha}_{2}})+\alpha_{12} \bigl[1-\gamma_1(\frac{u}{\widehat{\alpha}_{2}})\bigl]\Bigl\} \frac{\max\{0,1-u\}}{1+x}u\\
& \epsilon\frac{\mathrm{d}v}{\mathrm{d}t}=cT-\sigma_2v+\Bigl(\beta_2u-\epsilon\beta_1\max\{0,1-v\}v\Bigl)\Bigl\{\frac{x}{x+\eta_1}\gamma_2(\frac{v}{\widehat{\beta}_{2}})+\frac{x}{x+\eta_2}\bigl[1-\gamma_2(\frac{v}{\widehat{\beta}_{2}})\bigl]\Bigl\}
\end{aligned}
$}
\end{equation}

\subsection{Quasi-steady system}

We simplify system~\eqref{eq:nondimension} to a two-dimensional system using the quasi-steady-state approximation. The parameter $\epsilon=\frac{\alpha_1}{\sigma_1}$
is small. Based on the parameter estimates for aquatic insects and rainbow trout in Table~\ref{Table: Parameters}, we obtain $\epsilon=0.091$. This separation of time scales motivates the formal limit $\epsilon\rightarrow0$ in system~\eqref{eq:nondimension}, which gives
\begin{equation}
\left\{
\begin{array}{l}
T-u=0,\\[1mm]
cT-\sigma_2v+\beta_2u
\left\{
\displaystyle
\frac{x}{x+\eta_1}
\gamma_2\left(\frac{v}{\widehat{\beta}_2}\right)
+
\frac{x}{x+\eta_2}
\left[
1-\gamma_2\left(\frac{v}{\widehat{\beta}_2}\right)
\right]
\right\}
=0.
\end{array}
\right.
\label{eq:qssa_equations}
\end{equation}

Here,
\begin{equation}
\gamma_2\left(\frac{v}{\widehat{\beta}_2}\right)
=
\min\left\{
1,\,
a\left(\frac{v}{\widehat{\beta}_2}\right)^2
-b\frac{v}{\widehat{\beta}_2}
+s
\right\},
\label{eq:gamma2_qssa}
\end{equation}
where
\[
a=81740.5109,\qquad
b=358.1220,\qquad
s=0.4872.
\]

We consider the following two cases.

\noindent\textbf{Case 1.}
If 
\begin{equation}
a\left(\frac{v}{\widehat{\beta}_2}\right)^2
-b\frac{v}{\widehat{\beta}_2}
+s
\leq1,
\label{eq:gamma2_below_one}
\end{equation}
then $\gamma_2\left(\frac{v}{\widehat{\beta}_2}\right)
=
a\left(\frac{v}{\widehat{\beta}_2}\right)^2
-b\frac{v}{\widehat{\beta}_2}
+s$.
Substituting this expression into \eqref{eq:qssa_equations} and multiplying the second equation by $\widehat{\beta}_2^2$, we obtain
\begin{equation}
\left\{
\begin{array}{l}
u=T,\\[1mm]
a(x)v^2+b(x)v+c(x)=0.
\end{array}
\right.
\label{eq:qssa_quadratic}
\end{equation}

\noindent where
\begin{equation}
\left\{
\begin{aligned}
a(x)
&=
a\hat\beta_2T
\left(
\frac{x}{x+\eta_1}
-\frac{x}{x+\eta_2}
\right),\\
b(x)
&=
-b\beta_2T\widehat{\beta}_2
\left(
\frac{x}{x+\eta_1}
-\frac{x}{x+\eta_2}
\right)
-\widehat{\beta}_2^2\sigma_2,\\
c(x)
&=
\widehat{\beta}_2^2
\left[
cT
+s\beta_2T
\left(
\frac{x}{x+\eta_1}
-\frac{x}{x+\eta_2}
\right)
+\beta_2T\frac{x}{x+\eta_2}
\right].
\end{aligned}
\right.
\label{eq:qssa_coefficients}
\end{equation}

Therefore,
\begin{equation}
\left\{
\begin{array}{l}
u=T,\\[1mm]
\displaystyle
v_{\pm}(x)
=
\frac{-b(x)\pm\sqrt{b^2(x)-4a(x)c(x)}}{2a(x)}.
\end{array}
\right.
\label{eq:v_quadratic_solutions}
\end{equation}
Only nonnegative real solutions satisfying \eqref{eq:gamma2_below_one} are considered.

\noindent\textbf{Case 2.}
If
\begin{equation}
a\left(\frac{v}{\widehat{\beta}_2}\right)^2
-b\frac{v}{\widehat{\beta}_2}
+s
>1,
\label{eq:gamma2_above_one}
\end{equation}
then $\gamma_2\left(\frac{v}{\widehat{\beta}_2}\right)=1$.
Equation \eqref{eq:qssa_equations} becomes
\begin{equation}
\left\{
\begin{array}{l}
u=T,\\[1mm]
\displaystyle
cT-\sigma_2v+\beta_2T\frac{x}{x+\eta_1}=0.
\end{array}
\right.
\end{equation}
Therefore,
\begin{equation}
v(x)
=
\frac{1}{\sigma_2}
\left(
cT+\beta_2T\frac{x}{x+\eta_1}
\right).
\label{eq:v_saturated_solution}
\end{equation}
This solution is retained only when it satisfies \eqref{eq:gamma2_above_one}.

When $x=0$, we have $a(0)=0$, so the quadratic formula cannot be used. Therefore, we consider $x=0$ separately. From \eqref{eq:qssa_equations},
\begin{equation}
u=T,
\qquad
v(0)=\frac{cT}{\sigma_2}.
\end{equation}

If $\eta_1=\eta_2=\eta$, the two functional responses coincide and
\eqref{eq:qssa_equations} reduces directly to the single-valued relation
\begin{equation}
v(x)=\frac{1}{\sigma_2}
\left(cT+\beta_2T\frac{x}{x+\eta}\right).
\label{eq:v-equal-eta}
\end{equation}
Thus, the quadratic formula \eqref{eq:v_quadratic_solutions} is needed
only when $\eta_1\neq\eta_2$; the case $\eta_1=\eta_2$, including
$\rho_\mu=0$, is covered by \eqref{eq:v-equal-eta}.

To make the reduced system single-valued, define
\[
\Phi(x,v)=
\frac{x}{x+\eta_1}\gamma_2\!\left(\frac{v}{\widehat{\beta}_2}\right)
+
\frac{x}{x+\eta_2}
\left[1-\gamma_2\!\left(\frac{v}{\widehat{\beta}_2}\right)\right]
\]
and
\[
\mathcal F(x,v)=cT-\sigma_2v+\beta_2T\Phi(x,v).
\]
We select a continuous nonnegative quasi-steady branch $v=v(x)$,
defined on an interval $\mathcal I\subset[0,\infty)$ containing the
biologically relevant range of $x$, such that
\[
\mathcal F(x,v(x))=0,
\qquad
v(0)=\frac{cT}{\sigma_2},
\qquad x\in\mathcal I.
\]
The selected value must satisfy the appropriate admissibility condition
\eqref{eq:gamma2_below_one} or \eqref{eq:gamma2_above_one}. If multiple
admissible roots exist, the same branch is used throughout. All results
below refer to this selected branch.

Substituting $u=T$ and the selected branch $v=v(x)$ into
the first two equations of \eqref{eq:nondimension}, we obtain the
following quasi-steady system:
\begin{equation}
\label{eq: final2D}
\resizebox{\textwidth}{!}{ 
    $\displaystyle
\begin{aligned}
&\begin{cases}
\frac{\mathrm{d}x}{\mathrm{d}t}=
\Bigl(\Bigl\{\alpha_{11}\gamma_1(\frac{T}{\widehat{\alpha}_{2}})+\alpha_{12} \bigl[1-\gamma_1(\frac{T}{\widehat{\alpha}_{2}})\bigl]\Bigl\}\frac{\max\{0,1-T\}}{1+x}-k_1T-m_1\Bigl)x-\Bigl\{\frac{x}{x+\eta_1}\gamma_2(\frac{v(x)}{\widehat{\beta}_{2}})+\frac{x}{x+\eta_2}\bigl[1-\gamma_2(\frac{v(x)}{\widehat{\beta}_{2}})\bigl]\Bigl\}y\\
\frac{dy}{dt}=\beta_1\max\{0,1-v(x)\}\Bigl\{\frac{x}{x+\eta_1}\gamma_2(\frac{v(x)}{\widehat{\beta}_{2}})+\frac{x}{x+\eta_2}\bigl[1-\gamma_2(\frac{v(x)}{\widehat{\beta}_{2}})\bigl]\Bigl\}y-(k_2v(x)+m_2)y \\
\end{cases}
\end{aligned}
$}
\end{equation}
\subsection{Positivity and boundedness}

For system \eqref{eq: final2D} to be biologically meaningful, the prey and predator populations must remain nonnegative and bounded. The following theorem establishes this result, and its proof is given in Appendix~\ref{app: Positivity and boundedness}.

\begin{theorem}[Positivity and Boundedness]
\label{th:positivity_boundedness}
Let
\begin{equation}
k=
\max\left\{
\frac{\alpha_{11}\beta_1}{\min\{m_1,m_2\}},
\,
\beta_1x(0)+y(0)
\right\}+1
\end{equation}
and define
\begin{equation}
\Omega=
\left\{
(x,y):
0\leq x\leq\frac{k}{\beta_1},
\quad
0\leq y\leq k
\right\}.
\end{equation}
Assume that the selected quasi-steady branch $v=v(x)$ is continuous for
$0\leq x\leq k/\beta_1$. Solutions of system~\eqref{eq: final2D} with
initial conditions in $\Omega$ remain in $\Omega$ for all $t\geq0$.
\end{theorem}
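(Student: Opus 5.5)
The proof will use the factorized structure of \eqref{eq: final2D} for positivity and a linear Lyapunov-type function for boundedness. Throughout, write $B_T=\alpha_{11}\gamma_1(T/\widehat{\alpha}_2)+\alpha_{12}[1-\gamma_1(T/\widehat{\alpha}_2)]$ and $\Phi(x)=\Phi(x,v(x))$ for the selected branch. Since $\rho_\alpha\ge0$ and $0\le\gamma_1\le1$, we have $0<B_T\le\alpha_{11}$. Since $0\le\gamma_2\le1$, $\Phi(x)$ is a convex combination of $x/(x+\eta_1)$ and $x/(x+\eta_2)$, so $0\le\Phi(x)\le1$ and $\Phi(x)=x\,\psi(x)$ with $\psi$ continuous and bounded on $[0,k/\beta_1]$. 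The continuity of $v(x)$ assumed in Theorem~\ref{th:positivity_boundedness} makes the right-hand side continuous on $\Omega$, so local solutions exist by Peano's theorem. The branch is nonnegative, so $k_2v(x)+m_2\ge m_2$.

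\textbf{Positivity.} The system can be written as
\[
x'=x\,F(x,y),\qquad y'=y\,G(x),
\]
where $F=B_T\max\{0,1-T\}/(1+x)-k_1T-m_1-\psi(x)y$ and $G=\beta_1\max\{0,1-v(x)\}\Phi(x)-k_2v(x)-m_2$. Both $F$ and $G$ are bounded along any solution staying in the compact set $\Omega$. Hence
\[
x(t)=x(0)\exp\!\Big(\int_0^tF\Big),\qquad y(t)=y(0)\exp\!\Big(\int_0^tG\Big)
\]
hold along the solution, so $x,y\ge0$ and the coordinate axes are invariant. I would phrase this as: on the axes the vector field is tangent, and off them the sign cannot change without passing through zero. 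Because only continuity, not Lipschitz continuity, is available, I would state the argument for any solution, using the integral representation rather than uniqueness.

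\textbf{Boundedness.} Set $W=\beta_1x+y$. Adding $\beta_1$ times the $x$-equation to the $y$-equation, the predation terms combine to
\[
-\beta_1\Phi y+\beta_1\max\{0,1-v\}\Phi y\le0 .
\]
For the growth term, $x/(1+x)\le1$ and $\max\{0,1-T\}\le1$ give $\beta_1B_T\max\{0,1-T\}x/(1+x)\le\beta_1\alpha_{11}$. For the mortality terms,
\[
-\beta_1(k_1T+m_1)x-(k_2v+m_2)y\le-\min\{m_1,m_2\}\,W .
\]
Hence $W'\le\beta_1\alpha_{11}-mW$ with $m=\min\{m_1,m_2\}$. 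Gronwall's inequality then yields
\[
W(t)\le\max\{W(0),\beta_1\alpha_{11}/m\}\le k-1<k .
\]
The factor $\alpha_{11}\beta_1/\min\{m_1,m_2\}$ in $k$ matches this bound exactly. With $x,y\ge0$, this gives $x(t)\le k/\beta_1$ and $y(t)\le k$, so the solution stays in $\Omega$ and extends to all $t\ge0$.

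\textbf{Main obstacle.} The delicate point is the technical setup rather than the estimates. The branch $v(x)$ is only continuous and piecewise defined through the $\min$ truncation of $\gamma_2$, and the $\max$ terms are nonsmooth, so uniqueness is unavailable. I will handle this by working with an arbitrary solution and a continuation argument. Suppose the solution first leaves $\Omega$ at time $t^*$. Up to $t^*$ the bounds above hold, and they give the strict bound $W<k$ at $t^*$. This contradicts exit through the faces $x=k/\beta_1$ or $y=k$, and exit through the axes is excluded by the exponential representation.
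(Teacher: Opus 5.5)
Your proof is correct and follows essentially the paper's route: positivity comes from the factorized form $x'=xF$, $y'=yG$ through exponential bounds on $x$ and $y$, and boundedness comes from $z=\beta_1x+y$ satisfying $z'\le\alpha_{11}\beta_1-\min\{m_1,m_2\}z$, which gives $z<k$ and rules out exit through the faces $x=k/\beta_1$ and $y=k$. Your exact integrating-factor representation is slightly cleaner than the paper's first-exit-time lower bounds, since it does not rely on uniqueness and also covers initial data on the axes, where the paper's strict inequality $x(0)e^{\delta_1t_1}>0$ tacitly requires $x(0)>0$.
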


\subsection{Equilibrium existence and local stability}

Throughout the analysis, we assume that
\begin{equation}
m_1<1,\qquad
0<T<\min\left\{1,\frac{\sigma_2}{c}\right\},
\qquad
m_2<\beta_1.
\label{eq:basic_assumptions}
\end{equation}
The biological motivation for these assumptions is discussed in
\citep{huang2015impact}. We also assume that
\begin{equation}
\alpha_{12}\leq\alpha_{11},
\qquad
\mu_2\leq\mu_1,
\qquad
\eta_1\leq\eta_2.
\label{eq:personality_assumptions}
\end{equation}
The condition $\alpha_{12}\leq\alpha_{11}$ reflects our assumption that bold prey have a higher birth and growth rate \citep{biro2014individual,ballew2017fitness}. The conditions $\mu_2\leq\mu_1$ and $\eta_1\leq\eta_2$ reflect our assumption that bold predators have a higher encounter rate with prey.

For the nonnegative branch $v=v(x)$ selected above, define
\begin{equation}
\begin{aligned}
f(x)
&=
\left[
\left\{
\alpha_{11}\gamma_1\left(\frac{T}{\widehat{\alpha}_2}\right)
+\alpha_{12}
\left[
1-\gamma_1\left(\frac{T}{\widehat{\alpha}_2}\right)
\right]
\right\}
\frac{1-T}{1+x}
-k_1T-m_1
\right]x,\\
\phi(x)
&=
\frac{x}{x+\eta_1}
\gamma_2\left(\frac{v(x)}{\widehat{\beta}_2}\right)
+
\frac{x}{x+\eta_2}
\left[
1-\gamma_2\left(\frac{v(x)}{\widehat{\beta}_2}\right)
\right].
\end{aligned}
\label{eq:f_phi}
\end{equation}
From the second quasi-steady equation,
\begin{equation}
v(x)
=
\frac{cT}{\sigma_2}
+
\frac{\beta_2T}{\sigma_2}\phi(x).
\label{eq:v_phi}
\end{equation}
Therefore, we define
\begin{equation}
\begin{aligned}
g(x)
={}&
\beta_1\phi(x)
\max\left\{
0,\,
1-\frac{cT}{\sigma_2}
-\frac{\beta_2T}{\sigma_2}\phi(x)
\right\}-
\frac{k_2cT}{\sigma_2}
-
\frac{k_2\beta_2T}{\sigma_2}\phi(x)
-m_2.
\end{aligned}
\label{eq:g}
\end{equation}
System \eqref{eq: final2D} can then be written as
\begin{equation}
\left\{
\begin{array}{l}
\displaystyle
\frac{\mathrm{d}x}{\mathrm{d}t}
=f(x)-\phi(x)y,\\[2mm]
\displaystyle
\frac{\mathrm{d}y}{\mathrm{d}t}
=g(x)y.
\end{array}
\right.
\label{eq:reduced_system}
\end{equation}

\subsubsection{Conditions for the existence of equilibria}

In this section, we study how the toxin level $T$, the prey birth and
growth contrast $\rho_\alpha$, and the predator personality contrast
$\rho_\mu$ determine the existence of extinction, prey-only, and
coexistence equilibria of system~\eqref{eq: final2D}. The conditions
are evaluated in three steps.

First, following Appendix C of \citet{huang2015impact}, the toxin
thresholds $T_1^*$, $T_2^*$, and $T_3^*$ determine the number of
candidate coexistence equilibria. Their definitions are collected
after Table~\ref{tab:ex-of-eq}.

Second, for a fixed toxin level $T$, the prey personality contrast
$\rho_\alpha$ determines whether the prey can persist. When $\gamma_1\neq1/2$, 
\[
\begin{cases}
\rho_\alpha>\rho_\alpha^*(T),
& \gamma_1>\dfrac{1}{2},\\[2mm]
\rho_\alpha<\rho_\alpha^*(T),
& \gamma_1<\dfrac{1}{2}.
\end{cases}
\]
When $\gamma_1=1/2$, the weighted prey birth and growth coefficient is
independent of $\rho_\alpha$. In this case, prey persistence is equivalent
to
\[
\bar{\alpha}(1-T)>k_1T+m_1.
\]

Finally, the predator personality contrast $\rho_\mu$
determines the number of positive coexistence equilibria. For each
candidate root $\phi_i$, $i\in\{1,2\}$, the corresponding predator
body burden and bold proportion are
\begin{equation}
v_i=\frac{cT+\beta_2T\phi_i}{\sigma_2},
\qquad
\gamma_{2,i}
=
\gamma_2\left(\frac{v_i}{\widehat{\beta}_2}\right).
\label{eq:candidate-v-gamma}
\end{equation}
Under the prey-persistence condition above, an admissible candidate
$\phi_i$ gives a positive coexistence equilibrium if and only if
\begin{equation}
\Delta_{\mu,i}>0,
\qquad
\rho_\mu\in
[0,1)\cap
\left(\rho_{\mu,i}^{-},\rho_{\mu,i}^{+}\right),
\qquad
v(x_i)=v_i.
\label{eq:candidate-existence-conditions}
\end{equation}
Here, admissible means that $\phi_i$ is a distinct real root satisfying
the bounds in \eqref{eq:phi-candidate-condition}.
The interval must be nonempty. Depending on which candidates satisfy
these conditions, there are three possibilities: no positive coexistence
equilibrium, only $E_1$, or both $E_1$ and $E_2$.

The complete conditions are summarized in Table~\ref{tab:ex-of-eq}, and the detailed derivations are given in Appendix~\ref{app: Existence of equilibrium}.

\begin{table}[p]
\centering
\caption{Summary of the conditions for equilibrium existence and their biological interpretations. The notation is defined in Appendix~\ref{sec: Notation}.}
\label{tab:ex-of-eq}

\footnotesize
\renewcommand{\arraystretch}{1.10}
\setlength{\tabcolsep}{3.5pt}
\setlength{\abovedisplayskip}{3pt}
\setlength{\belowdisplayskip}{3pt}

\begin{tabularx}{\textwidth}{
    >{\centering\arraybackslash}p{0.04\textwidth}
    >{\raggedright\arraybackslash}p{0.55\textwidth}
    >{\raggedright\arraybackslash}X
}
\toprule
\textbf{Case}
&
\textbf{Mathematical conditions and equilibria}
&
\textbf{Biological interpretation}
\\
\midrule

\textbf{1}
&
If
\[
\begin{cases}
\rho_\alpha\leq\rho_\alpha^*(T),
& \gamma_1>\dfrac12,\\
\rho_\alpha\geq\rho_\alpha^*(T),
& \gamma_1<\dfrac12,
\end{cases}
\]
the system admits only
\[
E_{0,1}=(0,0).
\]
For $\gamma_1=1/2$, the same conclusion holds when
$\bar{\alpha}(1-T)\leq k_1T+m_1$.
&
The system has only the extinction equilibrium. When the majority
of prey individuals are bold $(\gamma_1>1/2)$, extinction occurs
when the prey personality contrast is sufficiently small. When
the majority are shy $(\gamma_1<1/2)$, extinction occurs when the
contrast is sufficiently large.
\\
\midrule

\textbf{2}
&
If
\[
\begin{cases}
\rho_\alpha>\rho_\alpha^*(T),
& \gamma_1>\dfrac12,\\
\rho_\alpha<\rho_\alpha^*(T),
& \gamma_1<\dfrac12,
\end{cases}
\]
the system always admits
\[
E_{0,1}=(0,0),
\qquad
E_{0,2}=(x_0,0).
\]
For $\gamma_1=1/2$, the same conclusion holds when
$\bar{\alpha}(1-T)>k_1T+m_1$. Cases 3 and 4 below specify when
coexistence equilibria are present in addition.
&
The system has extinction and prey-only equilibria. When the majority
of prey individuals are bold, prey persistence requires a sufficiently
large prey personality contrast. When the majority are shy, it
requires a sufficiently small contrast. Under these conditions,
coexistence is also possible if the additional
predator-nullcline conditions are satisfied.
\\
\midrule

\textbf{3}
&
Under the condition of Case 2, the system has one positive
coexistence equilibrium if either

\textnormal{(a)}
\[
T<\min\{T_1^*,T_2^*\},
\]
and
\[
\Delta_{\mu,1}>0,\qquad
\rho_\mu\in
[0,1)\cap
\left(\rho_{\mu,1}^{-},\rho_{\mu,1}^{+}\right),
\qquad
v(x_1)=v_1;
\]

or \textnormal{(b)}
\[
\max\{T_2^*,T_3^*\}<T<T_1^*,
\]
the above conditions hold for $i=1$, but the corresponding
conditions fail for $i=2$.

Hence, the equilibria are
\[
E_{0,1},\qquad
E_{0,2},\qquad
E_1=
\left(
x_1,\frac{f(x_1)}{\phi_1}
\right).
\]
&
If the condition in Case 2 is satisfied, one positive coexistence
equilibrium exists either at a sufficiently low toxin level when the
conditions for $E_1$ are satisfied, or at intermediate toxin levels
when the conditions for $E_1$ are satisfied but those for $E_2$ are
not.
\\
\midrule

\textbf{4}
&
Under the condition of Case 2, if
\[
\max\{T_2^*,T_3^*\}<T<T_1^*
\]
and
\[
\Delta_{\mu,2}>0,\qquad
\rho_\mu\in
[0,1)\cap
\left(\rho_{\mu,2}^{-},\rho_{\mu,2}^{+}\right),
\qquad
v(x_2)=v_2,
\]
then $E_2$ exists. By \eqref{eq:E2-implies-E1}, $E_1$ also exists.
Hence, the equilibria are
\[
E_{0,1},\qquad
E_{0,2},\qquad
E_1=
\left(
x_1,\frac{f(x_1)}{\phi_1}
\right),
\qquad
E_2=
\left(
x_2,\frac{f(x_2)}{\phi_2}
\right).
\]
&
If the condition in Case 2 is satisfied, two positive coexistence
equilibria exist at intermediate toxin levels when the predator
personality contrast lies within the range required for $E_2$. The
existence of $E_2$ also guarantees the existence of $E_1$.
\\
\bottomrule
\end{tabularx}
\end{table}

\subsubsection{Effect of prey personality contrast}

The biological meaning of $\rho_\alpha^*(T)$ follows from the
weighted prey birth and growth coefficient.
\begin{equation}
\bar{\alpha}
\left[
1+\rho_\alpha(2\gamma_1-1)
\right].
\label{eq:effective_prey_growth}
\end{equation}
Its derivative with respect to $\rho_\alpha$ is
\[
\bar{\alpha}(2\gamma_1-1).
\]

When $\gamma_1>1/2$, bold prey are the majority. Increasing
$\rho_\alpha$ increases the weighted prey birth and growth coefficient.
As $\rho_\alpha$ crosses $\rho_\alpha^*(T)$ from below, the system
changes from Case 1 to one of Cases 2-4. Thus, the system changes from
extinction only to a state with a positive prey-only equilibrium and
possible coexistence equilibria.

When $\gamma_1<1/2$, bold prey are the minority. Increasing
$\rho_\alpha$ decreases the weighted prey birth and growth coefficient.
Crossing $\rho_\alpha^*(T)$ from below changes the system from one of
Cases 2-4 to Case 1. Therefore, the effect of $\rho_\alpha$ depends on
whether bold prey are the majority or minority of the prey population. When $\gamma_1=1/2$, the sign of
$x_0$ is independent of $\rho_\alpha$ and must be checked directly
from $x_0=\bar{\alpha}(1-T)/(k_1T+m_1)-1$.

Figure~\ref{fig:prey_equilibrium_regions} illustrates the
contrasting effects of $\rho_\alpha$ described above together with the
corresponding local-stability regimes. At each parameter point, all
biologically feasible equilibria were first computed, and their local
stability was determined from the eigenvalues of the corresponding
Jacobian matrices. The red region represents a locally stable
extinction equilibrium $E_{0,1}$, the blue region a locally stable prey-only
equilibrium $E_{0,2}$, and the green region a locally stable coexistence
equilibrium $E_1$. The yellow region represents local
bistability between the prey-only equilibrium and $E_1$. The
coexistence equilibrium $E_2$ was also examined, but no locally stable
$E_2$ was found in the displayed parameter ranges. The dark curves
separate the numerically determined local-stability regimes, whereas
the dashed line represents the existence threshold
$\rho_\alpha=\rho_\alpha^*(T)$.

The two panels illustrate how the effect of prey personality
contrast depends on the relative abundance of bold and shy prey. In
panel (a), $\gamma_1=0.407<1/2$, so shy prey are the majority.
Increasing $\rho_\alpha$ across $\rho_\alpha^*(T)$ moves the system
from a parameter region with a locally stable prey-only or coexistence
equilibrium into the locally stable extinction region. Biologically,
a larger $\rho_\alpha$ represents a stronger personality-induced
contrast in the birth and growth rates of bold and shy prey. Because
the majority of prey are shy and have the lower birth and growth rate,
a larger contrast reduces the weighted prey birth and growth
coefficient and weakens prey persistence. In contrast, in panel (b), $\gamma_1=0.525>1/2$, so bold prey are the
majority. Increasing $\rho_\alpha$ across $\rho_\alpha^*(T)$ moves the
system from the locally stable extinction region into a parameter
region with a locally stable prey-only or coexistence equilibrium.
The opposite biological effect occurs because the majority of prey
are bold and have the higher birth and growth rate. Therefore, a
larger personality-induced contrast increases the weighted prey birth
and growth coefficient and promotes prey persistence. Taken together, increasing prey personality contrast promotes extinction when shy
prey are the majority but promotes persistence when bold prey are the
majority. Within the prey-persistence range, the joint effects of
$\rho_\alpha$ and $\rho_\mu$ further determine whether the system
exhibits locally stable prey-only persistence, locally stable
coexistence, or local bistability.

\begin{figure}[htbp]
\centering
\includegraphics[width=0.96\textwidth]{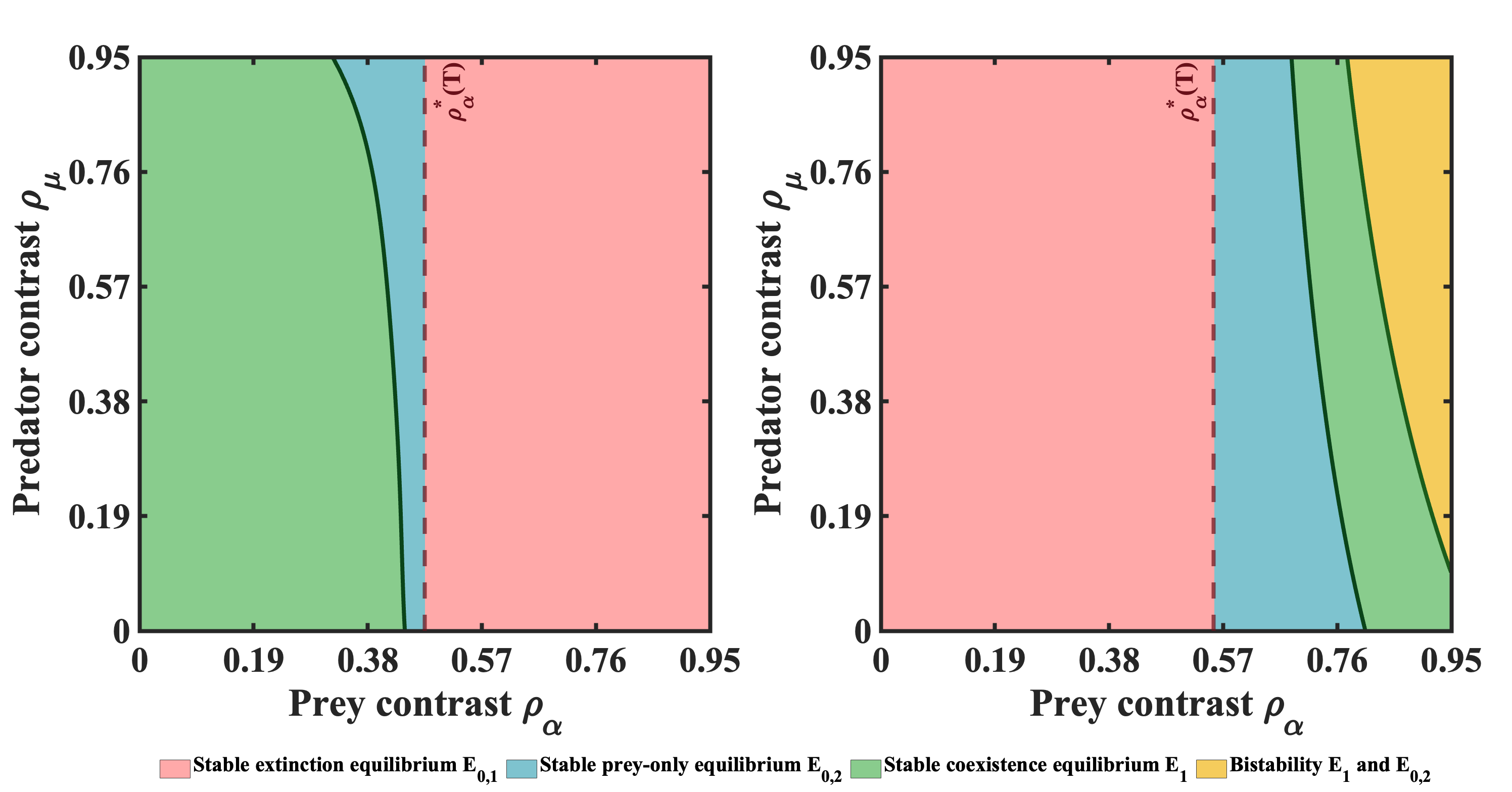}
\caption{Existence and local-stability classification in the
$(\rho_\alpha,\rho_\mu)$ plane for two representative cases:
(a) $\gamma_1<1/2$, and (b) $\gamma_1>1/2$. The parameters in panel (a)
are $\widehat{\beta}_2=33.4100$, $T=0.11$, $\widehat{\alpha}_2=0.4250$, $\bar{\alpha}=0.2000$,
$m_1=0.1619$, $\beta_1=1.2648$, $m_2=0.0075$,
$k_1=0.0039$, $k_2=0.3079$, $\sigma_2=2.5805$,
$\beta_2=23.2793$, $c=0.9328$, and $\bar{\eta}=0.3017$.
The parameters in panel (b) are $\widehat{\beta}_2=33.4100$, $T=0.20$, $\widehat{\alpha}_2=0.4000$,
$\bar{\alpha}=0.2380$, $m_1=0.1912$, $\beta_1=4.8421$,
$m_2=0.0104$, $k_1=0.0229$, $k_2=0.7007$,
$\sigma_2=3.1101$, $\beta_2=52.0996$, $c=1.2091$, and
$\bar{\eta}=0.2387$.}
\label{fig:prey_equilibrium_regions}
\end{figure}

\subsubsection{Effect of predator personality contrast}

Under the prey-persistence condition, we fix
$\rho_\alpha=0.20$ and study how the predator personality contrast
$\rho_\mu$ affects the number of positive coexistence equilibria.

For each candidate root $\phi_i$, a positive coexistence equilibrium
exists if
\[
\Delta_{\mu,i}>0,
\qquad
\max\{0,\rho_{\mu,i}^{-}\}
<
\rho_\mu
<
\min\{1,\rho_{\mu,i}^{+}\},
\qquad
v(x_i)=v_i.
\]
The interval condition shows that the existence of a candidate can
depend on $\rho_\mu$. In general, if both endpoints lie in $(0,1)$,
the candidate exists only for an intermediate range of predator
personality contrast.

The intermediate range of $\rho_\mu$ follows from the positivity
condition $x_i<x_0$. As $\rho_\mu$ increases, the contribution of bold
predators to the functional response increases, whereas that of shy
predators decreases. The balance between these two effects satisfies
$x_i<x_0$ only when
\[
\rho_{\mu,i}^-<\rho_\mu<\rho_{\mu,i}^+.
\]
Therefore, when both bounds lie in $(0,1)$, positive coexistence occurs
only at intermediate values of the predator personality contrast.

The conditions above show that, once prey persistence is possible,
$\rho_\mu$ primarily controls the existence and number of positive
coexistence equilibria. Figure~\ref{fig:coexistence_equilibrium_regions}
shows the three possibilities listed in Table~\ref{tab:ex-of-eq}
together with their local stability. At each point of the
$(T,\rho_\mu)$ plane, the analytical conditions were first used to
determine whether there is no positive coexistence equilibrium, only
$E_1$, or both $E_1$ and $E_2$. The local stability of all feasible
equilibria was then determined numerically from the eigenvalues of the
corresponding Jacobian matrices.

The red and blue regions contain no positive coexistence equilibrium;
they represent a locally stable extinction equilibrium $E_{0,1}$ and
a locally stable prey-only equilibrium $E_{0,2}$, respectively. In
the green region, only the positive coexistence equilibrium $E_1$
exists and it is locally stable. In the yellow region, both
$E_1$ and $E_2$ exist. Here, $E_1$ and the prey-only equilibrium
$E_{0,2}$ are locally stable, giving local bistability, whereas $E_2$
is unstable. The boundaries between these regions agree with
the existence conditions derived above, while the
numerical calculation further determines the stability of the
feasible equilibria.

The curved boundary between the green and yellow regions shows
the effect of $\rho_\mu$ clearly. At intermediate toxin levels,
increasing $\rho_\mu$ changes the number of positive coexistence
equilibria from one to two. At the same time, the local-stability
regime changes from stable coexistence alone to bistability between
$E_1$ and $E_{0,2}$. For a horizontal slice passing through the
yellow region, increasing the toxin level changes the system from
one positive coexistence equilibrium to two, then back to one, and
finally to none. The equilibrium stability pattern changes from local monostability to local bistability and then back to local monostability. The vertical lines
indicate the analytical toxin thresholds $T_3^*$, $T_2^*$, and
$T_1^*$.

\begin{figure}[htbp]
\centering
\includegraphics[width=0.96\textwidth]{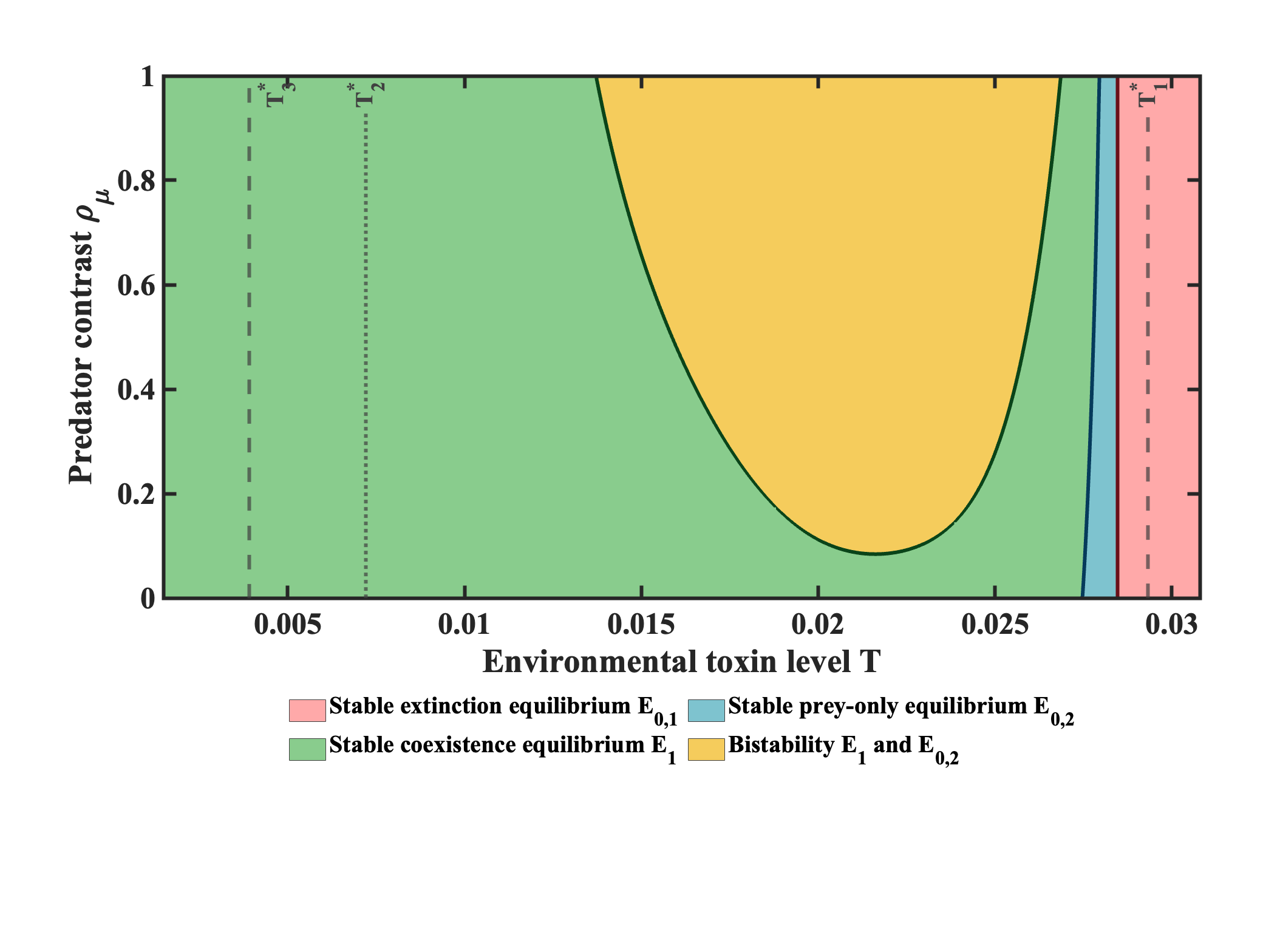}
\caption{Existence and local-stability classification in the
$(T,\rho_\mu)$ plane for fixed $\rho_\alpha=0.20$. All
parameter values are reported to four decimal places:
$m_1=0.3617$, $\beta_1=0.8688$, $m_2=0.0079$,
$k_1=1.9610$, $k_2=0.0090$, $\sigma_2=0.1545$,
$\beta_2=18.2991$, $c=2.7205$, $\bar{\eta}=0.9435$,
$\bar{\alpha}=0.5328$, $\widehat{\alpha}_2=0.0220$, and
$\widehat{\beta}_2=33.4100$.}
\label{fig:coexistence_equilibrium_regions}
\end{figure}

\FloatBarrier
\section{Discussion}
\label{sec: Discussion}

Many experimental studies have shown that animal personality can influence behaviour and may also be affected by environmental toxins \citep{Cole2014,Michelena2009sheep,Scott2004behavior,RamirezRomero2005}. Motivated by these experimental observations, we developed a predator–prey model that incorporates toxin-induced intraspecific personality variation in both prey and predators. 


Our results show that prey and predator personality have different effects on long-term population dynamics. When prey personality is included, the oscillatory intervals are reduced or disappear. In contrast, predator personality can generate new oscillatory intervals or split one interval into two. We also find that increasing the difference in prey birth and growth reduces population oscillations, whereas increasing the difference in predator foraging amplifies them. When personality is included in both prey and predators, their effects are not simply additive. This may be because prey personality affects birth and growth, whereas predator personality affects foraging and predation pressure.

Using a quasi-steady-state approximation, we reduced the original
model to a two-dimensional system. We established its positivity and
boundedness and derived conditions for the existence of extinction,
prey-only, and positive coexistence equilibria. The equilibrium
structure is jointly determined by the toxin level $T$, the prey
personality contrast $\rho_\alpha$, and the predator personality
contrast $\rho_\mu$. The toxin thresholds $T_1^*$, $T_2^*$, and
$T_3^*$ determine the possible candidate roots for coexistence. For a
fixed $T$, $\rho_\alpha$ determines whether the prey can persist.
Increasing $\rho_\alpha$ promotes prey persistence when bold prey are
the majority but has the opposite effect when shy prey are the
majority. The predator personality contrast $\rho_\mu$ determines
whether the system has zero, one, or two positive coexistence
equilibria. Based on this existence classification, we numerically
investigated the local stability of all biologically feasible
equilibria and identified parameter regions corresponding to locally
stable extinction, prey-only persistence, coexistence, and
bistability.


The effects of personality are mainly observed at low and moderate toxin concentrations, while sufficiently high toxin concentrations lead to predator extinction in all model variants. This suggests that personality can change population dynamics before direct toxic effects cause population loss. Our model assumes a constant toxin concentration, although toxin levels in natural aquatic ecosystems may vary over time and space \citep{huang2015impact}. Moreover, individual personality may influence biological metabolism \citep{Careau2008dis}. Future work could therefore consider time- and space-dependent toxin exposure, alternative personality response functions, and personality-dependent metabolism. Further analytical and global stability analyses, together with bifurcation analysis, are also needed. Overall, our results show that toxin-induced personality variation can qualitatively change predator-prey dynamics.

\appendix
\section{Proofs and supplementary fitting details}

\subsection{Positivity and boundedness}
\label{app: Positivity and boundedness}

\begin{proof}

Assume that $S(t)=(x(t),y(t))$ is a solution of system
\eqref{eq: final2D} with $S(0)\in\Omega$. Let $t_1$ be the first
time that $S(t)$ reaches or crosses the boundary of $\Omega$.
We consider the following three cases.

\medskip
\noindent
\textbf{Case 1.} Suppose that $x(t_1)=0$. For
$t\in[0,t_1]$, we have
\begin{equation}
\begin{aligned}
\frac{\mathrm{d}x}{\mathrm{d}t}
={}&
\left[
\alpha_{11}\gamma_1
\left(\frac{T}{\widehat{\alpha}_2}\right)
+
\alpha_{12}
\left(
1-\gamma_1
\left(\frac{T}{\widehat{\alpha}_2}\right)
\right)
\right]
\frac{\max\{0,1-T\}}{1+x}x-(k_1T+m_1)x\\
&-
\left[
\frac{x}{x+\eta_1}
\gamma_2\left(\frac{v}{\widehat{\beta}_2}\right)
+
\frac{x}{x+\eta_2}
\left(
1-\gamma_2\left(\frac{v}{\widehat{\beta}_2}\right)
\right)
\right]y.
\end{aligned}
\end{equation}
Since the first term is nonnegative,
$\eta_1\leq\eta_2$, and $y\leq k$ on $[0,t_1]$,
\begin{equation}
\begin{aligned}
\frac{\mathrm{d}x}{\mathrm{d}t}
&\geq
-(k_1T+m_1)x-\frac{x}{\eta_1}y\geq
\left(
-k_1T-m_1-\frac{k}{\eta_1}
\right)x\equiv\delta_1x,
\end{aligned}
\end{equation}
where $\delta_1=-k_1T-m_1-\frac{k}{\eta_1}$. Therefore, $x(t_1)\geq x(0)e^{\delta_1t_1}>0$, which contradicts $x(t_1)=0$. Hence, the solution cannot cross
the boundary $x=0$.

\medskip
\noindent
\textbf{Case 2.} Suppose that $y(t_1)=0$. Since $v(t)$ is continuous
on $[0,t_1]$, let $M_v:=\max_{s\in[0,t_1]} v(s).$ Then, for $t\in[0,t_1]$,
\begin{equation}
\begin{aligned}
\frac{\mathrm{d}y}{\mathrm{d}t}
={}&
\beta_1\max\{0,1-v\}
\left[
\frac{x}{x+\eta_1}
\gamma_2\left(\frac{v}{\widehat{\beta}_2}\right)
+
\frac{x}{x+\eta_2}
\left(
1-\gamma_2\left(\frac{v}{\widehat{\beta}_2}\right)
\right)
\right]y-(k_2v+m_2)y\\
&\geq-(k_2M_v+m_2)y\\
&\equiv\delta_2y,
\end{aligned}
\end{equation}
where $\delta_2=-k_2M_v-m_2$.
Therefore, $y(t_1)\geq y(0)e^{\delta_2t_1}>0$,
which contradicts $y(t_1)=0$. Hence, the solution cannot cross
the boundary $y=0$.

\medskip
\noindent
\textbf{Case 3.} Suppose that $x(t_1)=\frac{k}{\beta_1}$ or $y(t_1)=k$. Let $z(t)=\beta_1x(t)+y(t)$. Then
\begin{equation}
\begin{aligned}
\frac{\mathrm{d}z}{\mathrm{d}t}
={}&
\beta_1
\left[
\alpha_{11}\gamma_1
\left(\frac{T}{\widehat{\alpha}_2}\right)
+
\alpha_{12}
\left(
1-\gamma_1
\left(\frac{T}{\widehat{\alpha}_2}\right)
\right)
\right]
\frac{\max\{0,1-T\}}{1+x}x\\
&-\beta_1(k_1T+m_1)x-(k_2v+m_2)y\\
&+\beta_1
\left[\max\{0,1-v\}-1\right]
\left[
\frac{x}{x+\eta_1}
\gamma_2\left(\frac{v}{\widehat{\beta}_2}\right)
+
\frac{x}{x+\eta_2}
\left(
1-\gamma_2\left(\frac{v}{\widehat{\beta}_2}\right)
\right)
\right]y.
\end{aligned}
\end{equation}
Since
\[
\max\{0,1-T\}\leq1,\qquad
\max\{0,1-v\}\leq1,\qquad
\alpha_{12}\leq\alpha_{11},
\]
we obtain
\begin{equation}
\begin{aligned}
\frac{\mathrm{d}z}{\mathrm{d}t}
&\leq
\beta_1\frac{\alpha_{11}}{1+x}x
-\beta_1m_1x-m_2y\\
&\leq
\alpha_{11}\beta_1
-\beta_1m_1x-m_2y\\
&\leq
\alpha_{11}\beta_1
-\min\{m_1,m_2\}z.
\end{aligned}
\end{equation}
Solving this differential inequality gives
\begin{equation}
z(t)
\leq
\left[
z(0)-
\frac{\alpha_{11}\beta_1}{\min\{m_1,m_2\}}
\right]
e^{-\min\{m_1,m_2\}t}
+
\frac{\alpha_{11}\beta_1}{\min\{m_1,m_2\}}.
\end{equation}
Hence,
\[
z(t)
\leq
\max\left\{
\frac{\alpha_{11}\beta_1}{\min\{m_1,m_2\}},
\beta_1x(0)+y(0)
\right\}
<k.
\]
However, either $x(t_1)=k/\beta_1$ or $y(t_1)=k$ implies
$z(t_1)\geq k$, which is a contradiction. Therefore, the solution
cannot cross either upper boundary.

Thus, every solution starting in $\Omega$ remains in $\Omega$
for all forward time. This completes the proof.

\end{proof}

\subsection{Existence of equilibria}
\label{app: Existence of equilibrium}

\subsubsection*{Case 1: Extinction equilibrium only}

Since the $x$-nullcline contains $x=0$ and the $y$-nullcline
contains $y=0$, their intersection gives the extinction equilibrium
\[
E_{0,1}=(0,0).
\]
Since $g(0)=-k_2v(0)-m_2<0$, no equilibrium with $x=0$ and
$y>0$ exists. Hence any additional nonnegative equilibrium must have
$x>0$ and lie on the nullcline
\[
y=\frac{f(x)}{\phi(x)}
\]
together with either $g(x)=0$ or $y=0$.

Since $\phi(x)$ is positive on $(0,\infty)$, the sign of
$f(x)/\phi(x)$ is determined by the sign of $f(x)$. We therefore
consider
\begin{equation}
\begin{aligned}
f(x)=
\Bigg(
&\left[
\alpha_{11}\gamma_1
\left(\frac{T}{\widehat{\alpha}_2}\right)
+
\alpha_{12}
\left(
1-\gamma_1
\left(\frac{T}{\widehat{\alpha}_2}\right)
\right)
\right]
\frac{1-T}{1+x} -k_1T-m_1
\Bigg)x.
\end{aligned}
\end{equation}
Since $x>0$, it remains to study the sign of
\begin{equation}
\begin{aligned}
p(x):={}&
\left[
\alpha_{11}\gamma_1
\left(\frac{T}{\widehat{\alpha}_2}\right)
+
\alpha_{12}
\left(
1-\gamma_1
\left(\frac{T}{\widehat{\alpha}_2}\right)
\right)
\right]
\frac{1-T}{1+x}-k_1T-m_1.
\end{aligned}
\end{equation}
The function $p(x)$ is decreasing with respect to $x$. Solving
$p(x)=0$ gives
\begin{equation}
x_0=
\left[
\alpha_{11}\gamma_1
\left(\frac{T}{\widehat{\alpha}_2}\right)
+
\alpha_{12}
\left(
1-\gamma_1
\left(\frac{T}{\widehat{\alpha}_2}\right)
\right)
\right]
\frac{1-T}{k_1T+m_1}-1.
\end{equation}
Thus, if $x_0\leq0$, then
\begin{equation}
p(x)<p(0)\leq0,
\qquad \forall x>0.
\end{equation}
Consequently, $f(x)<0$ for all $x>0$, and hence the nullcline
\[
y=\frac{f(x)}{\phi(x)}
\]
does not intersect either $g(x)=0$ or $y=0$ in the positive region.

Next, we derive a threshold condition on $\rho_\alpha$. Since
\[
\begin{aligned}
&\alpha_{11}\gamma_1
\left(\frac{T}{\widehat{\alpha}_2}\right)
+
\alpha_{12}
\left[
1-\gamma_1
\left(\frac{T}{\widehat{\alpha}_2}\right)
\right]=
\bar{\alpha}(1+\rho_\alpha)\gamma_1
+
\bar{\alpha}(1-\rho_\alpha)(1-\gamma_1)=
\bar{\alpha}
\left(1-\rho_\alpha+2\rho_\alpha\gamma_1\right),
\end{aligned}
\]
we have
\begin{equation}
x_0=
\bar{\alpha}
\left(1-\rho_\alpha+2\rho_\alpha\gamma_1\right)
\frac{1-T}{k_1T+m_1}-1
\leq0.
\end{equation}
Therefore, for $\gamma_1\neq1/2$, the condition $x_0\leq0$ is
equivalent to
\begin{equation}
\begin{cases}
\displaystyle
\rho_\alpha
\leq
\frac{
\dfrac{k_1T+m_1}{\bar{\alpha}(1-T)}-1
}{
2\gamma_1-1
}
:=\rho_\alpha^*(T),
&
\displaystyle \gamma_1>\frac12,
\\[4mm]
\displaystyle
\rho_\alpha
\geq
\frac{
\dfrac{k_1T+m_1}{\bar{\alpha}(1-T)}-1
}{
2\gamma_1-1
}
:=\rho_\alpha^*(T),
&
\displaystyle \gamma_1<\frac12.
\end{cases}
\end{equation}

When $\gamma_1=1/2$, the weighted prey birth and growth coefficient
is $\alpha_{11}\gamma_1+\alpha_{12}(1-\gamma_1)
=
\frac{\alpha_{11}+\alpha_{12}}{2}
=
\bar{\alpha}$. Therefore, $x_0=
\bar{\alpha}\frac{1-T}{k_1T+m_1}-1$,
which is independent of $\rho_\alpha$. In this case, the system has
only the extinction equilibrium if $x_0\leq0$, while the prey-only
equilibrium also exists if $x_0>0$.

Hence, whenever $x_0\leq0$, the system admits only the
extinction equilibrium
\[
E_{0,1}=(0,0).
\]

\subsubsection*{Case 2: Extinction and prey-only equilibria}

According to the discussion in Case 1, the prey-only equilibrium
$E_{0,2}=(x_0,0)$ exists if $x_0>0$. Similarly, we derive a
threshold condition on $\rho_\alpha$. For $\gamma_1\neq1/2$, we have
\begin{equation}
\begin{cases}
\displaystyle
\rho_\alpha>\rho_\alpha^*(T),
&
\displaystyle \gamma_1>\frac12,
\\[2mm]
\displaystyle
\rho_\alpha<\rho_\alpha^*(T),
&
\displaystyle \gamma_1<\frac12.
\end{cases}
\end{equation}
When $\gamma_1=1/2$, the same conclusion holds precisely when
$\bar{\alpha}(1-T)>k_1T+m_1$. Hence, under this threshold condition, the system
admits the
extinction equilibrium
\[
E_{0,1}=(0,0)
\]
and the prey-only equilibrium
\[
E_{0,2}=(x_0,0).
\]

\subsubsection*{Case 3: Coexistence equilibria}

Throughout this proof, $v=v(x)$ denotes the nonnegative quasi-steady
branch selected above. In particular, the same
single-valued branch is used when evaluating every candidate
equilibrium below.

The coexistence equilibria satisfy
\begin{equation}
g(x)=0,
\qquad
y=\frac{f(x)}{\phi(x)}>0,
\qquad
x>0.
\end{equation}
Since $\phi(x)>0$ for $x>0$, we require
\begin{equation}
f(x)=
\left(
\left[\alpha_{11}\gamma_1+\alpha_{12}(1-\gamma_1)\right]
\frac{1-T}{1+x}
-k_1T-m_1
\right)x>0,
\end{equation}
where
\[
\gamma_1=\gamma_1\left(\frac{T}{\widehat{\alpha}_2}\right).
\]
This is equivalent to
\[
x_0>0,
\qquad
0<x<x_0,
\]
where
\begin{equation}
\begin{aligned}
x_0
&=
\left[\alpha_{11}\gamma_1+\alpha_{12}(1-\gamma_1)\right]
\frac{1-T}{k_1T+m_1}-1\\
&=
\bar{\alpha}
\left[1+\rho_\alpha(2\gamma_1-1)\right]
\frac{1-T}{k_1T+m_1}-1.
\end{aligned}
\end{equation}

For $\gamma_1\neq1/2$, the condition $x_0>0$ is equivalent to
\begin{equation}
\begin{cases}
\rho_\alpha>\rho_\alpha^*(T),
& \gamma_1>\dfrac12,\\[2mm]
\rho_\alpha<\rho_\alpha^*(T),
& \gamma_1<\dfrac12,
\end{cases}
\label{eq:coexistence-prey-condition}
\end{equation}
where
\begin{equation}
\rho_\alpha^*(T)=
\frac{
\dfrac{k_1T+m_1}{\bar{\alpha}(1-T)}-1
}{
2\gamma_1-1
}.
\end{equation}
When $\gamma_1=1/2$, the condition becomes
\[
\bar{\alpha}(1-T)>k_1T+m_1
\]
and is independent of the prey personality contrast
$\rho_\alpha$.

From the quasi-steady equation,
\begin{equation}
v(x)=\frac{cT+\beta_2T\phi(x)}{\sigma_2}.
\label{eq:v-phi-relation}
\end{equation}
If $v(x)\geq1$, the predator reproduction term is zero and
\[
g(x)=-(k_2v(x)+m_2)<0.
\]
Thus, a coexistence equilibrium must satisfy
\begin{equation}
\phi(x)<\frac{\sigma_2-cT}{\beta_2T}.
\label{eq:p1}
\end{equation}

Under \eqref{eq:p1}, we have
\begin{equation}
\begin{aligned}
g(x)={}&
\beta_1\phi(x)
\left(
1-\frac{cT}{\sigma_2}
-\frac{\beta_2T}{\sigma_2}\phi(x)
\right)-\frac{k_2cT}{\sigma_2}
-\frac{k_2\beta_2T}{\sigma_2}\phi(x)
-m_2.
\end{aligned}
\label{eq:g-reduced}
\end{equation}
Setting $g(x)=0$ gives
\begin{equation}
\beta_1\beta_2T\phi^2
-
\left[
\beta_1(\sigma_2-cT)-k_2\beta_2T
\right]\phi
+k_2cT+m_2\sigma_2=0.
\label{eq:phi-quadratic}
\end{equation}

The quadratic equation \eqref{eq:phi-quadratic} is the same as
that derived in Appendix C of \citet{huang2015impact}. Here we retain
the discriminant and admissibility conditions explicitly so that the
threshold cases are also covered.

The roots of \eqref{eq:phi-quadratic} are
\begin{equation}
\phi_{1,2}=
\frac{
\beta_1(\sigma_2-cT)-k_2\beta_2T
\mp\sqrt{\Delta(T)}
}{
2\beta_1\beta_2T
},
\label{eq:phi-candidate-roots}
\end{equation}
where
\begin{equation}
\Delta(T)=
\left[
\beta_1(\sigma_2-cT)-k_2\beta_2T
\right]^2
-4\beta_1\beta_2T
\left(
k_2cT+m_2\sigma_2
\right).
\end{equation}
Only distinct real roots satisfying
\begin{equation}
0<\phi_i<1,
\qquad
\phi_i<\frac{\sigma_2-cT}{\beta_2T}
\label{eq:phi-candidate-condition}
\end{equation}
are considered.

Following Appendix C of \citet{huang2015impact}, in the generic
nondegenerate parameter regions, only $\phi_1$ is admissible when
\[
T<\min\{T_1^*,T_2^*\},
\]
whereas both $\phi_1$ and $\phi_2$ are admissible when
\[
\max\{T_2^*,T_3^*\}<T<T_1^*.
\]

For each root $\phi_i$, the corresponding predator body burden
and bold proportion are
\begin{equation}
v_i=
\frac{cT+\beta_2T\phi_i}{\sigma_2},
\qquad
\gamma_{2,i}
=
\gamma_2\left(
\frac{v_i}{\widehat{\beta}_2}
\right).
\label{eq:equilibrium-v-gamma}
\end{equation}
The corresponding prey biomass $x_i$ satisfies
\begin{equation}
\phi_i=
\gamma_{2,i}\frac{x_i}{x_i+\eta_1}
+
(1-\gamma_{2,i})
\frac{x_i}{x_i+\eta_2}.
\label{eq:xi-functional-response}
\end{equation}
Multiplying by $(x_i+\eta_1)(x_i+\eta_2)$ gives
\begin{equation}
(1-\phi_i)x_i^2
+A_ix_i
-\phi_i\eta_1\eta_2=0,
\end{equation}
where
\begin{equation}
A_i=
\eta_1+(\eta_2-\eta_1)\gamma_{2,i}
-\phi_i(\eta_1+\eta_2).
\end{equation}
Since $0<\phi_i<1$, the leading coefficient is positive and
the constant term is negative. Thus, the two roots have opposite
signs, and the unique positive root is
\begin{equation}
x_i=
\frac{
-A_i+
\sqrt{
A_i^2+4\phi_i(1-\phi_i)\eta_1\eta_2
}
}{
2(1-\phi_i)
}.
\label{eq:xi-corrected}
\end{equation}

The candidate gives a positive coexistence equilibrium when
\[
0<x_i<x_0
\]
and
\begin{equation}
v(x_i)=v_i.
\label{eq:coexistence-branch-condition}
\end{equation}
The last condition ensures that $(x_i,v_i)$ lies on the
quasi-steady branch selected in the reduced system. It holds
automatically when the nonnegative quasi-steady solution is unique.
Under these conditions,
\begin{equation}
E_i=
\left(
x_i,\frac{f(x_i)}{\phi_i}
\right)
\end{equation}
is a positive coexistence equilibrium.

We next express $x_i<x_0$ in terms of the predator foraging
contrast $\rho_\mu$. Recall that
\begin{equation}
\mu_1=\bar{\mu}(1+\rho_\mu),
\qquad
\mu_2=\bar{\mu}(1-\rho_\mu),
\qquad
0\leq\rho_\mu<1.
\label{eq:mu-parameterization}
\end{equation}
Since $\eta_j=1/(\mu_jh)$, defining
\[
\bar{\eta}=\frac{1}{\bar{\mu}h}
\]
gives
\begin{equation}
\eta_1=\frac{\bar{\eta}}{1+\rho_\mu},
\qquad
\eta_2=\frac{\bar{\eta}}{1-\rho_\mu}.
\label{eq:eta-mu-parameterization}
\end{equation}

For each $\phi_i$, $\gamma_{2,i}$ is fixed by
\eqref{eq:equilibrium-v-gamma}. Therefore,
\begin{equation}
\begin{aligned}
\frac{\mathrm{d}}{\mathrm{d}x}
\left[
\gamma_{2,i}\frac{x}{x+\eta_1}
+
(1-\gamma_{2,i})\frac{x}{x+\eta_2}
\right]
={}&
\gamma_{2,i}\frac{\eta_1}{(x+\eta_1)^2}+
(1-\gamma_{2,i})
\frac{\eta_2}{(x+\eta_2)^2}
>0.
\end{aligned}
\end{equation}
Hence, \eqref{eq:xi-functional-response} gives
\begin{equation}
x_i<x_0
\quad\Longleftrightarrow\quad
\phi_i<
\gamma_{2,i}\frac{x_0}{x_0+\eta_1}
+
(1-\gamma_{2,i})\frac{x_0}{x_0+\eta_2}.
\label{eq:xi-x0-condition}
\end{equation}

Substituting \eqref{eq:eta-mu-parameterization}, the
right-hand side of \eqref{eq:xi-x0-condition} becomes
\begin{equation}
\frac{
x_0(\bar{\eta}+x_0)
+(2\gamma_{2,i}-1)\bar{\eta}x_0\rho_\mu
-x_0^2\rho_\mu^2
}{
(\bar{\eta}+x_0)^2-x_0^2\rho_\mu^2
}.
\label{eq:response-at-x0}
\end{equation}
For $x_0>0$ and $0\leq\rho_\mu<1$, the denominator is
positive because
\begin{equation}
(\bar{\eta}+x_0)^2-x_0^2\rho_\mu^2
>
\bar{\eta}^2+2\bar{\eta}x_0>0.
\end{equation}
Therefore,
\begin{equation}
x_i<x_0
\quad\Longleftrightarrow\quad
Q_i(\rho_\mu)>0,
\end{equation}
where
\begin{equation}
Q_i(\rho_\mu)=
-(1-\phi_i)x_0^2\rho_\mu^2
+(2\gamma_{2,i}-1)\bar{\eta}x_0\rho_\mu
+C_i
\label{eq:Qi-mu-positive}
\end{equation}
and
\begin{equation}
C_i=
(1-\phi_i)x_0^2
+(1-2\phi_i)\bar{\eta}x_0
-\phi_i\bar{\eta}^2.
\end{equation}
For fixed $T$, $\rho_\alpha$, $\bar{\mu}$, and the other
parameters, $\phi_i$, $\gamma_{2,i}$, and $x_0$ do not depend
on $\rho_\mu$. Hence, $Q_i$ is a quadratic in $\rho_\mu$.

The discriminant of $Q_i$ is
\begin{equation}
\Delta_{\mu,i}
=
x_0^2
\left[
\bar{\eta}^2(2\gamma_{2,i}-1)^2
+4(1-\phi_i)C_i
\right].
\end{equation}
Since $0<\phi_i<1$, $Q_i$ is a downward-opening quadratic.
If $\Delta_{\mu,i}\leq0$, the inequality
$Q_i(\rho_\mu)>0$ has no solution.

If $\Delta_{\mu,i}>0$, the two distinct roots are
\begin{equation}
\rho_{\mu,i}^{\pm}
=
\frac{
\bar{\eta}(2\gamma_{2,i}-1)
\pm
\sqrt{
\bar{\eta}^2(2\gamma_{2,i}-1)^2
+4(1-\phi_i)C_i
}
}{
2(1-\phi_i)x_0
}.
\label{eq:mu-distinct-roots}
\end{equation}
Consequently,
\begin{equation}
x_i<x_0
\quad\Longleftrightarrow\quad
\Delta_{\mu,i}>0
\quad\text{and}\quad
\rho_\mu\in
[0,1)\cap
\left(
\rho_{\mu,i}^{-},
\rho_{\mu,i}^{+}
\right).
\label{eq:mu-complete-condition}
\end{equation}
For $0<\rho_\mu<1$, the interval condition can also be
written as
\begin{equation}
\max\{0,\rho_{\mu,i}^{-}\}
<
\rho_\mu
<
\min\{1,\rho_{\mu,i}^{+}\},
\label{eq:mu-min-max-condition}
\end{equation}
provided that
\[
\max\{0,\rho_{\mu,i}^{-}\}
<
\min\{1,\rho_{\mu,i}^{+}\}.
\]
The case $\rho_\mu=0$ is included in
\eqref{eq:mu-complete-condition} when
\[
\rho_{\mu,i}^{-}<0<\rho_{\mu,i}^{+}.
\]

We finally compare the two candidate equilibria. Suppose that
$E_2$ exists. Then
\[
\phi(x_2)=\phi_2,
\qquad
0<x_2<x_0.
\]
Since $\phi(0)=0$, $0<\phi_1<\phi_2$, and $\phi(x)$ is
continuous on $[0,x_2]$, there exists
$\widetilde{x}_1\in(0,x_2)$ such that
\[
\phi(\widetilde{x}_1)=\phi_1.
\]
By \eqref{eq:v-phi-relation},
$v(\widetilde{x}_1)=v_1$. Therefore,
$\widetilde{x}_1$ satisfies
\eqref{eq:xi-functional-response} for $i=1$.
Since this equation has a unique positive solution,
$\widetilde{x}_1=x_1$. Hence,
\[
0<x_1<x_2<x_0,
\]
so $E_1$ also exists. Therefore,
\begin{equation}
E_2\text{ exists}
\quad\Longrightarrow\quad
E_1\text{ exists}.
\label{eq:E2-implies-E1}
\end{equation}

Consequently, for each admissible candidate root $\phi_i$, the
conditions \eqref{eq:mu-complete-condition} and $v(x_i)=v_i$ are
necessary and sufficient for $E_i$ to exist, under the prey-persistence
condition $x_0>0$. If two distinct admissible roots are present,
\eqref{eq:E2-implies-E1} shows that $E_2$ cannot exist without $E_1$.
Together with the cases $x_0\leq0$ and $x_0>0$ above, this gives the
classification in Table~\ref{tab:ex-of-eq}.

\subsection{Behavioural classification and data fitting}
\label{app:behavioural-classification}

The source experiments measured drift responses and swimming distance.
In our model, these responses are used to classify individuals as bold
or shy. Prey individuals that drifted within 10 minutes are classified
as bold, whereas predators whose total distance travelled exceeded the
control-group median are classified as bold. The remaining individuals
are classified as shy.

\subsubsection{Prey drift response and fitting of \(\gamma_1(u)\)}

To estimate \(\gamma_1(u)\), we use the aqueous copper exposure data
from \citet{stitt2006evaluation}. The four reported copper
concentrations were \(C<1\ \mu\mathrm{g/L}\),
\(C=3\ \mu\mathrm{g/L}\), \(C=78\ \mu\mathrm{g/L}\), and
\(C=229\ \mu\mathrm{g/L}\). For numerical fitting, the concentration
below the detection limit is represented by
\(C=0.5\ \mu\mathrm{g/L}\).

Following the drift assay, individuals that released their grip on the
substrate and drifted downstream within 10 minutes are classified as
bold for modelling purposes, whereas those that remained attached are
classified as shy. Thus, the fitted proportion of drifting individuals
is used as an operational measure of \(\gamma_1(u)\).

For the numerical fit, we use the 10-minute drift proportions
0.07, 0.23, 0.43, and 0.46
at \(C=0.5\), \(3\), \(78\), and \(229\ \mu\mathrm{g/L}\), respectively.

To assign a body burden to each exposure level, we use the uptake
coefficient \(a_1=0.55\ \mathrm{L/(g\cdot day)}\) from
\citet{huang2015impact}. As a modelling approximation, we use the
10-minute assay period as the uptake interval and set
\(t=10/(60\cdot24)=1/144\) day. The corresponding prey body burden is
estimated by
\begin{equation}
u=a_1Ct
=0.55C\left(\frac{1}{144}\right)
=3.8\times10^{-3}C\ \mu\mathrm{g}/\mathrm{g}.
\end{equation}
This conversion is used as a model proxy and should not be interpreted
as a directly measured body burden.
The four exposure concentrations therefore correspond to
\[
u=1.90\times10^{-3},\quad
1.14\times10^{-2},\quad
0.296,\quad
0.870\ \mu\mathrm{g}/\mathrm{g},
\]
respectively. These estimated body burdens and the four 10-minute
drift proportions listed above are used to fit \(\gamma_1(u)\), as shown in
Figure~\ref{fig:gamma_fitting}.

\subsubsection{Predator swimming response and fitting of \(\gamma_2(v)\)}

To estimate \(\gamma_2(v)\), we use the data from Experiment 4 of
\citet{Glazer2021gammav}. Zebrafish were developmentally exposed to
\(C=0\), \(5\), \(10\), or \(30\) nM MeHg, and their adult swimming
behaviour was later measured in a five-minute novel-tank test. We use
the median total distance travelled in the control group as a fixed
classification threshold. In our model, fish above this threshold are classified as
bold, whereas fish at or below the threshold
are classified as shy. Thus,
\begin{equation}
\gamma_2
=
\frac{\#\{\mathrm{TDT}>\text{control median}\}}
{\#\{\text{total fish}\}},
\end{equation}
where \(\mathrm{TDT}\) denotes total distance travelled. The resulting
bold proportions are \(30/61\), \(20/72\), \(9/54\), and \(4/12\).

The experiment did not report adult body burdens. We therefore map the
developmental exposure concentration \(C\) to the model variable \(v\)
using
\begin{equation}
v=qC.
\label{eq:predator-proxy-scaling}
\end{equation}
The scaling coefficient \(q\) is calculated using
\(1\ \mathrm{nM\ MeHg}\approx0.216\ \mu\mathrm{g/L}\), the uptake
coefficient \(a_2=0.1733\ \mathrm{L/(g\cdot day)}\) from
\citet{huang2015impact}, and a five-minute reference interval:
\begin{equation}
\begin{aligned}
q
&=
a_2
\left(0.216\ \mu\mathrm{g/L\ per\ nM}\right)
\left(\frac{5}{60\cdot24}\ \mathrm{day}\right)
\approx
1.3\times10^{-4}
\ \mu\mathrm{g}\,\mathrm{g}^{-1}\mathrm{nM}^{-1}.
\end{aligned}
\end{equation}
The five-minute interval is used here only to set the scale of \(v\).
For \(C=0,5,10,30\) nM, this gives
\[
v=0,\quad
6.5\times10^{-4},\quad
1.3\times10^{-3},\quad
3.9\times10^{-3}
\ \mu\mathrm{g}/\mathrm{g},
\]
respectively. These scaled values and the four
observed bold proportions are used to fit \(\gamma_2(v)\), as shown in
Figure~\ref{fig:gamma_fitting}.



\begin{figure}[htbp]
    \centering
    \begin{minipage}{0.48\textwidth}
        \centering
        \includegraphics[width=\linewidth]{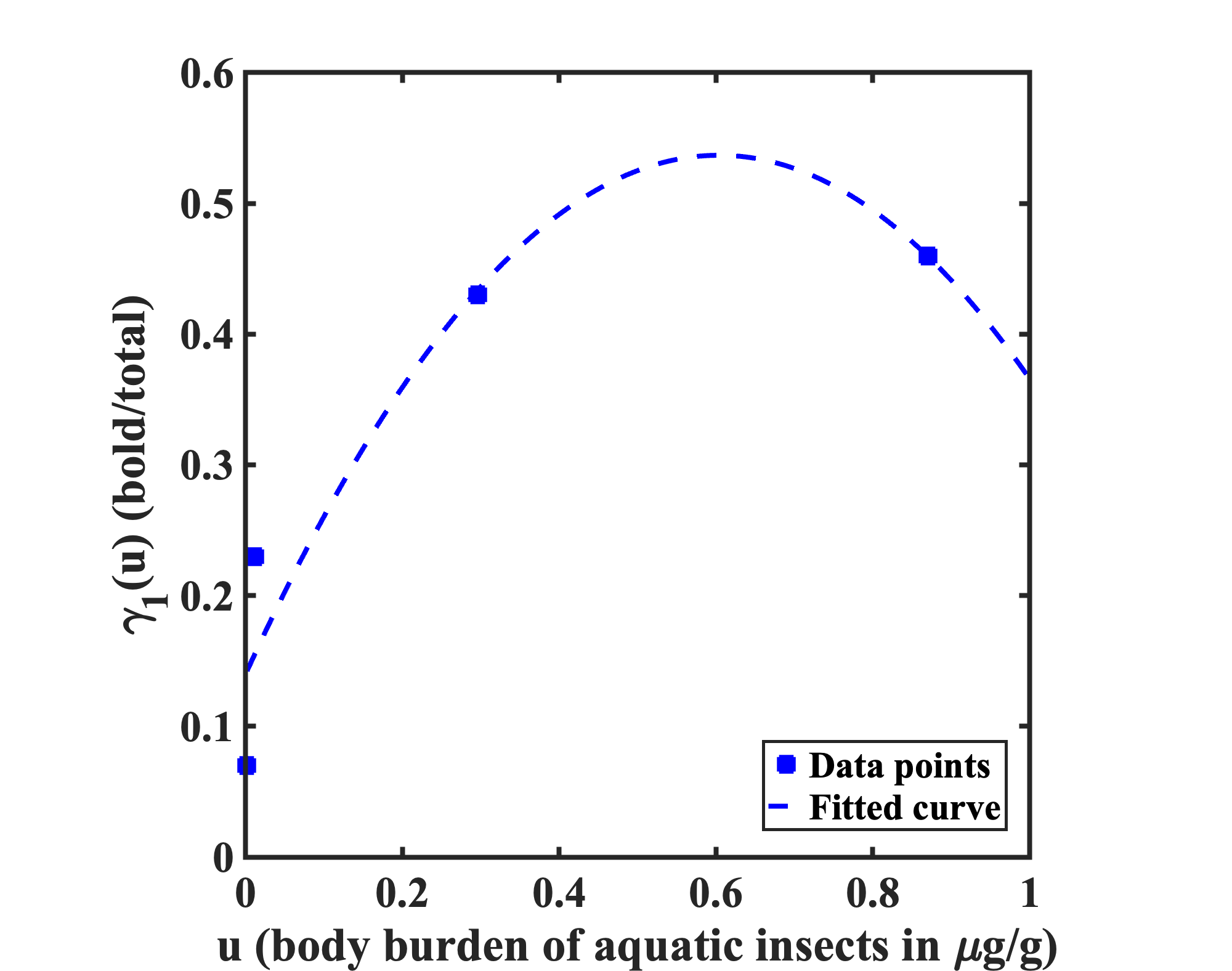}
        \caption*{(a)}
    \end{minipage}
    \hfill
    \begin{minipage}{0.48\textwidth}
        \centering
        \includegraphics[width=\linewidth]{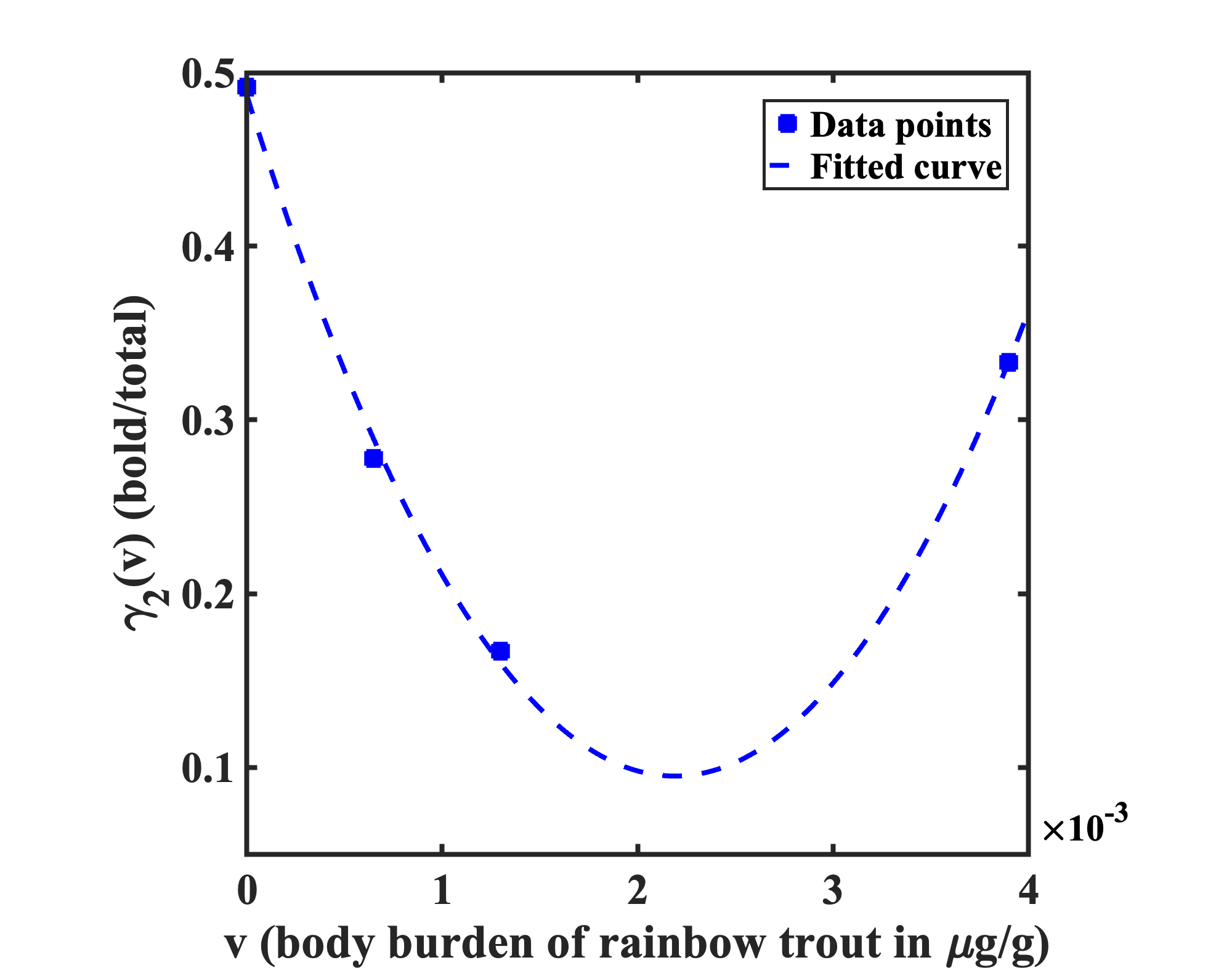}
        \caption*{(b)}
    \end{minipage}

    \caption{Fitted toxin-dependent proportions of individuals classified
    as bold based on the measured behavioural responses.
    (a) Fitted proportion of drifting prey, $\gamma_1(u)$, based on
    \citet{stitt2006evaluation}.
    (b) Fitted proportion of predators above the control-median
    swimming-distance threshold, $\gamma_2(v)$, based on Experiment~4
    of \citet{Glazer2021gammav}.}
    \label{fig:gamma_fitting}
\end{figure}

\section{Notation}
\label{sec: Notation}
The quantities used in Table~\ref{tab:ex-of-eq} are defined as follows:
\begin{flalign*}
T_1^* &:=
\frac{\beta_1\sigma_2\left(\beta_1c+k_2\beta_2+2\beta_2m_2\right)
-\beta_1\sigma_2\sqrt{4\beta_2\left(\beta_1c+\beta_2m_2\right)
\left(m_2+k_2\right)}}{\left(\beta_1c-k_2\beta_2\right)^2}, &&\\[5pt]
T_2^* &:=
\frac{(\beta_1-m_2)\sigma_2}
{\beta_1\beta_2+\beta_1c+k_2\beta_2+k_2c}, &&\\[5pt]
T_3^* &:=
\frac{\beta_1\sigma_2}
{\beta_1c+k_2\beta_2+2\beta_1\beta_2}, &&\\[5pt]
\gamma_1 &:=
\gamma_1\left(\frac{T}{\widehat{\alpha}_2}\right), &&\\[5pt]
v_i &:=
\frac{cT+\beta_2T\phi_i}{\sigma_2},
\qquad
\gamma_{2,i}:=
\gamma_2\left(\frac{v_i}{\widehat{\beta}_2}\right),
\qquad i\in\{1,2\}, &&\\[5pt]
x_0 &:=
\bar{\alpha}\left[1+\rho_\alpha(2\gamma_1-1)\right]
\frac{1-T}{k_1T+m_1}-1, &&\\[5pt]
\rho_\alpha^*(T) &:=
\frac{\dfrac{k_1T+m_1}{\bar{\alpha}(1-T)}-1}{2\gamma_1-1},
\qquad \gamma_1\neq\frac12, &&\\[5pt]
\phi_{1,2}(T) &:=
\frac{\beta_1(\sigma_2-cT)-k_2\beta_2T\mp\sqrt{\Delta(T)}}
{2\beta_1\beta_2T}, &&\\[5pt]
\Delta(T) &:=
\left[\beta_1(\sigma_2-cT)-k_2\beta_2T\right]^2
-4\beta_1\beta_2T\left(k_2cT+m_2\sigma_2\right), &&\\[5pt]
C_i &:=
(1-\phi_i)x_0^2+(1-2\phi_i)\bar{\eta}x_0
-\phi_i\bar{\eta}^2,
\qquad i\in\{1,2\}, &&\\[5pt]
\Delta_{\mu,i} &:=
x_0^2\left[\bar{\eta}^2(2\gamma_{2,i}-1)^2
+4(1-\phi_i)C_i\right],
\qquad i\in\{1,2\}, &&\\[5pt]
\rho_{\mu,i}^{\pm} &:=
\frac{\bar{\eta}(2\gamma_{2,i}-1)
\pm\sqrt{\bar{\eta}^2(2\gamma_{2,i}-1)^2
+4(1-\phi_i)C_i}}
{2(1-\phi_i)x_0},
\qquad i\in\{1,2\}, &&\\[5pt]
A_i &:=
\eta_1+(\eta_2-\eta_1)\gamma_{2,i}
-\phi_i(\eta_1+\eta_2),
\qquad i\in\{1,2\}, &&\\[5pt]
x_i &:=
\frac{-A_i+\sqrt{A_i^2+4\phi_i(1-\phi_i)\eta_1\eta_2}}
{2(1-\phi_i)},
\qquad i\in\{1,2\}. &&
\end{flalign*}

\section*{Acknowledgments}
The research of Hao Wang was partially supported by the Natural Sciences and Engineering Research Council of Canada (Individual Discovery Grant RGPIN-2025-05734) and the Canada Research Chairs program (Tier 1 Canada
Research Chair Award).


\end{document}